\documentclass[pra,twocolumn,preprintnumbers,amsmath,amssymb,superscriptaddress,longbibliography]{revtex4-2}

\usepackage{color}
\usepackage{graphicx}% Incl fgfrude figure files
\usepackage{dcolumn}% Align table columns on decimal point
\usepackage{bm}% bold math
\usepackage{hyperref}
\usepackage{enumitem}
\usepackage{balance}
\usepackage{comment}
\usepackage{cleveref}

\usepackage{float}
\usepackage{multirow}
\usepackage{makecell}

\usepackage{amsthm}
\usepackage{svg}

\usepackage{xcolor}   
\usepackage{physics}

\theoremstyle{plain} %italicizes text
\newtheorem{theorem}{Theorem}[section]

\begin{document}

\title{Preparing squeezed, cat and GKP states with parity measurements}

\author{Zhiyuan Lin}
\thanks{The indicated authors are joint first authors}
 \affiliation{State Key Laboratory of Precision Spectroscopy, School of Physical and Material Sciences, East China Normal University, Shanghai 200062, China}

\author{Sen Li}
\thanks{The indicated authors are joint first authors}
\affiliation{State Key Laboratory of Precision Spectroscopy, School of Physical and Material Sciences, East China Normal University, Shanghai 200062, China}

\author{Jingyan Feng}
\affiliation{New York University Shanghai; NYU-ECNU Institute of Physics at NYU Shanghai, 567 West Yangsi Road, Shanghai, 200124, China.}

\author{Kaixuan Zhou}
\affiliation{New York University Shanghai; NYU-ECNU Institute of Physics at NYU Shanghai, 567 West Yangsi Road, Shanghai, 200124, China.}
\affiliation{Department of Physics, New York University, New York, NY 10003, USA}

\author{Theodore Mollano}
\affiliation{New York University Shanghai; NYU-ECNU Institute of Physics at NYU Shanghai, 567 West Yangsi Road, Shanghai, 200124, China.}
\affiliation{Williams College, Williamstown MA 01267, USA.}

\author{Valentin Ivannikov}
\affiliation{New York University Shanghai; NYU-ECNU Institute of Physics at NYU Shanghai, 567 West Yangsi Road, Shanghai, 200124, China.}

\author{Matteo Fadel}
\email{fadelm@phys.ethz.ch}
\affiliation{Department of Physics, ETH Zürich, 8093 Zürich, Switzerland}

\author{Tim Byrnes}
\email{tim.byrnes@nyu.edu}
\affiliation{New York University Shanghai; NYU-ECNU Institute of Physics at NYU Shanghai, 567 West Yangsi Road, Shanghai, 200124, China.}
\affiliation{State Key Laboratory of Precision Spectroscopy, School of Physical and Material Sciences, East China Normal University, Shanghai 200062, China}
\affiliation{Center for Quantum and Topological Systems (CQTS), NYUAD Research Institute, New York University Abu Dhabi, UAE.}
\affiliation{Department of Physics, New York University, New York, NY 10003, USA}

\begin{abstract}
Bosonic modes constitute a central resource in a wide range of quantum technologies, providing long-lived degrees of freedom for the storage, processing, and transduction of quantum information. 
Such modes naturally arise in platforms including circuit quantum electrodynamics, quantum acoustodynamics, and trapped-ion systems. 
In these architectures, coherent control and high-fidelity readout of the bosonic degrees of freedom are achieved via coupling to an auxiliary qubit. 
When operated in the strong dispersive regime, this interaction enables parity measurements of the mode which, in combination with phase-space displacements, constitute a standard experimental tool for full Wigner-function tomography.
Here, we propose a protocol based on displaced parity measurements that allows for the preparation of a variety of bosonic quantum states.
We demonstrate the generation of squeezed states, achieving $\sim 9\,$dB of quantum noise reduction after three parity measurements, and larger squeezing with an increasing number of measurements in the lossless case. The technique can be generalized to the preparation of other paradigmatic bosonic states, including cat and Gottesman–Kitaev–Preskill states. Using more general dispersive measurements and displacements, we show that the scheme is universal, such that it is possible to prepare an arbitrary state. 

\end{abstract}

\date{\today}

%\pacs{}
\maketitle

\section{Introduction}
\label{sec:introduction}

Heisenberg's uncertainty principle states that two canonically conjugated observables (i.e. position and momentum) have limits on the precision of their simultaneous estimate.
This is formalized by stating that the product of their variances cannot be smaller than a value set by quantum mechanics. 
While particular types of states (e.g. coherent states) have equal uncertainties in the two observables, squeezed states reduce the uncertainty in one variable at the expenses of increasing the uncertainty in the other.  
This allows for a way of reducing quantum noise below the standard quantum limit, where the noise of a measurement is limited by quantum noise. 

Since the first observation of squeezing \cite{breitenbach1997measurement}, experimental generation is now commonplace with best squeezing levels up to 15 dB in optical systems \cite{vahlbruch2016detection,schonbeck201713}. 
Methods of generating squeezed states include parametric processes from nonlinearities  \cite{slusher1985observation,banaee2008squeezed,Stefszky_2011,nadgaran2023squeezed,eichler2011observation,yurke1989observation}, conditional preparation based on quantum nondemolition measurements  \cite{colangelo2017simultaneous,hosten2016measurement,chen2011conditional,schleier2010states},  coupling to additional modes \cite{wineland1992spin,aspelmeyer2014cavity,leroux2010implementation}, nonadiabatic techniques \cite{xin2023long}, and many-body dynamics \cite{berrada2013integrated,strobel2014fisher,lucke2014detecting}. 
The primary application of squeezed states is for quantum metrology, where they may be used for precision measurements \cite{giovannetti2011advances,pezze2018quantum,byrnes2021quantum,jabir2024quantum}, but they also play a crucial role in quantum communication \cite{braunstein2005quantum} and simulation \cite{LundPRL14,HamiltonPRL17}. 

While optical systems are canonical systems for generating squeezed states, several other platforms have demonstrated highly-controllable bosonic modes where squeezed and other quantum states can be engineered.  
Examples include microwave cavity modes in circuit quantum electrodynamics (cQED) \cite{blais2021circuit,li2011engineering,krasnok2024superconducting}, vibration modes in circuit quantum acoustodynamics (cQAD) \cite{marti2024quantum}, or motional states of trapped ions \cite{wolf2019motional,jia2022determination,sutherland2021motional}. 
Interestingly, in these systems the bosonic mode couples to a discrete degree-of-freedom, such as a qubit, through a coupling often described by the Jaynes-Cummings model \cite{blais2021circuit,LeibfriedRMP03}. 
The qubit can be used as the nonlinear element required to engineer effective squeezing dynamics for the bosonic mode. 
Specifically, squeezing is generally realized by parametric processes where the system is driven off-resonantly at suitable frequencies.
The best squeezing that has been attained for these system are 8 dB for cQED {\color{red} \cite{dassonneville2021dissipative,pan2023protecting},} around 3 dB for cQAD \cite{marti2024quantum}, and 5 dB in trapped ions \cite{sutherland2021motional}.

Squeezed states are also the building blocks for other types of states which are useful in contexts beyond quantum metrology. 
A paradigmatic example is the Gottesmann-Knill-Preskill (GKP) code \cite{gottesman2001encoding}, where the logical states consist of superpositions of equally displaced squeezed states.  This allows for a method of forming an error protected qubit by exploiting the infinite-dimensional Hilbert space of a bosonic mode \cite{brady2024advances}. 
Although ideal GKP codewords have infinite energy, as they involve infinite squeezing, finite-resources approximations are nevertheless extremely useful for a variety of applications such as quantum repeaters \cite{rozpkedek2021quantum}, quantum metrology \cite{valahu2025quantum}, and error correction \cite{royer2022encoding}. 
Approximated GKP states have been prepared on a variety of platforms, such as trapped ions \cite{fluhmann2019encoding}, superconducting microwave cavities \cite{campagne2020quantum}, and photonics \cite{konno2024logical,larsen2025integrated}.  The primary method that is used to generate such states is ``breeding'' \cite{weigand2018generating}, where conditional displacements are performed, then interfered. 
While great efforts have already been devoted to its generation, the complex structure of GKP state makes challenging to generate such states, and more efficient or complementary methods are always desirable.

In this paper, we introduce methods for generating squeezed, cat, and GKP states based on displacement operations and parity measurements.  More generally, we consider the combination of dispersive measurements and displacement operations as a state generation primitive.  As such we call the general framework the Parity/Number basis measurement Displacement Algorithm (PANDA). Our scheme using parity measurements is illustrated in Fig.~\ref{fig1}.  By combining parity measurements and displacements we show that a variety of states can be generated. We first develop a protocol for preparing squeezed states using a sequence of displaced parity measurements applied along the anti-squeezed quadrature (see Fig.~\ref{fig1}(a)). The key observation underlying this approach is that a squeezed vacuum state occupies only the even-parity subspace of the Fock basis. In the limit of infinite squeezing, the state becomes invariant under displacements along the squeezed axis, which leads to a set of eigenvalue conditions that can be implemented through parity measurements.
We analyze the performance of this scheme by quantifying the achievable squeezing and assessing the impact of realistic imperfections, demonstrating the robustness of the protocol.
We further show that sequences of displacements combined with parity measurements can be interpreted as operations that generate coherent superpositions of a state and its phase-space–reflected counterpart. This observation enables a natural extension of the protocol to convert a squeezed state into a Gottesman–Kitaev–Preskill (GKP) state, which may be viewed as a comb of displaced squeezed states.
Finally, we show that the combination of dispersive measurements and displacements is universal in that it can generate an arbitrary state.

We note that existing methods for measurement-based preparation of squeezed states typically require measuring phase-space quadratures of the bosonic state \cite{vanner2011pulsed, szorkovszky2011mechanical, wollman2015quantum}.  In several qubit-mode systems these are observables that are not native elementary measurement primitives, and may be difficult to implement, motivating our work.  We point out however that recently there have been optimized-control techniques that have shown progress demonstrating that such observables can be accessed through ancillary-qubit measurements~\cite{krisnanda2026direct}.

\begin{figure*}[t]
\includegraphics[width=\linewidth]{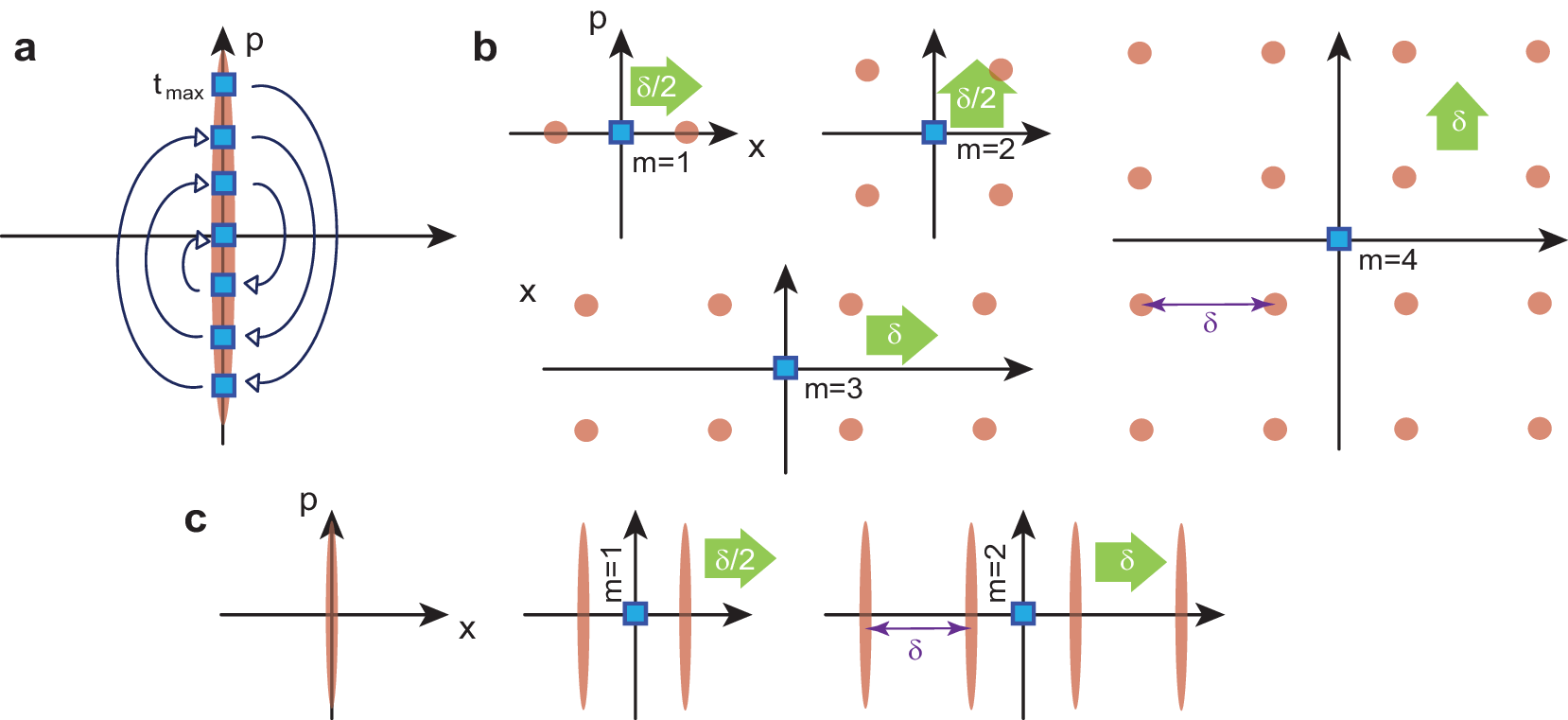}
\caption{Protocols for generating (a) squeezed states, (b)  multi-component cat states, and (c) GKP states with parity measurements and displacements.  Squares denote the displaced parity measurement $ P_+ $ as given in (\ref{dispparity}), where the location of the parity measurement is at $ \alpha $.  Circles and ovals represent coherent and squeezed states respectively. Large arrows denote displacements (\ref{dispop}), with the magnitude of the displacement indicated within the arrow, and the direction of the displacement is in the arrow direction.  (a) The squeezed state measurement sequence occurs in the order indicated by the curved arrows.  (b) In each measurement step, the displacement is made before the parity measurement, creating a square lattice with spacing $ \delta $.   (c) In first step, create a squeezed state using the sequence given in (a).  Then use a sequence of displacements and parity measurements as shown to point reflect the squeezed state to form a comb separated by $ \delta $.    
}
\label{fig1}
\end{figure*}

\section{Physical system}

\subsection{Definitions}

The physical system that we consider in this paper is a single bosonic mode, which is associated to the annihilation operator $ a $ satisfying the canonical commutation relation $ [a, a^\dagger ] = 1 $. 
This mode can represent a continuous-variable degree of freedom in different physical systems, such as microwave photons in cQED, or phonons in cQAD or trapped ions. We keep our formalism general such that it can be applied to any such platform.  

The Hilbert space of a bosonic mode is spanned by Fock states, namely the states defined as
\begin{align}
|n \rangle = \frac{{(a^\dagger)}^n}{\sqrt{n!}} | 0 \rangle ,
\label{fockstates}
\end{align}
where $ | 0 \rangle $ is the vacuum state satisfying $a\ket{0}=0$. Coherent states are defined as 
\begin{align}
| \alpha \rangle = e^{-|\alpha|^2/2} e^{\alpha a^\dagger} | 0 \rangle
= e^{-|\alpha|^2/2} \sum_{n=0}^\infty \frac{\alpha^n}{\sqrt{n!}} |n \rangle .
\label{coherent}
\end{align}
Since bosonic degrees of freedom arise from harmonic oscillators, it is natural to introduce position and momentum quadrature observables as
\begin{equation}
    x = \frac{1}{2} ( a+ a^\dagger) \;,\qquad
    p = -\frac{i}{2} ( a - a^\dagger)   \;,
\end{equation}
such that $ [x,p ] = i/2 $.  These have eigenstates \cite{braunstein2005quantum}
\begin{align}
    x|x_0 \rangle & = x_0 | x_0 \rangle  \nonumber \\
    p|p_0 \rangle & = p_0 | p_0 \rangle   ,
    \label{quadeigenstates}
\end{align}
satisfying the orthogonality conditions
\begin{align}
 \langle x_0 | x_0' \rangle &  = \delta (x_0 - x_0' ) \nonumber \\
 \langle p_0 | p_0' \rangle &  = \delta (p_0 - p_0' )  ,
\end{align}
and correspond to infinitely squeezed states with precisely defined position $ \langle x \rangle  = x_0 $, or momentum $ \langle p \rangle  = p_0 $, respectively.  We may also  consider a generalized quadrature parametrized by an angle $\theta$ in phase space as
\begin{align}
x^{(\theta)} = \frac{1}{2} (a e^{-i \theta} + a^\dagger e^{i \theta})
\end{align}
with eigenstates
\begin{align}
x^{(\theta)}| x^{(\theta)}_0 \rangle = x^{(\theta)}_0| x^{(\theta)}_0 \rangle .  
\label{thetaeigen}
\end{align}
Of particular relevance is the squeezed vacuum state, for which $\langle x^{(\theta)}\rangle=0$, which is defined as \cite{gerry2023introductory}

\begin{equation}
|\xi \rangle = \frac{1}{\sqrt{\cosh r}} 
\sum_{n=0}^\infty  (-1)^n \frac{\sqrt{(2n)!}}{2^n n!} e^{2in \phi} \tanh^n r |2 n \rangle \;,
\label{squeezedstate}
\end{equation}
where  $ \xi = r e^{2 i \phi} $.  This state can be understood as the action of the squeezing operator 
\begin{align}
    S(\xi) = e^{\frac{1}{2} ( \xi^*  a^2 - \xi {a^\dagger}^2)} \;,
\end{align}
on the vacuum state, i.e. $|\xi \rangle  = S(\xi) | 0 \rangle \nonumber$.

Defining the variance as
\begin{align}
    \text{Var} (x) = \langle x^2 \rangle -  \langle x \rangle^2 \;,
\end{align}
we have that the vacuum state $\ket{0}$ has isotropic variance $\text{Var}(x^{(\theta)}) = 1/4$, while the squeezed state $\ket{\xi}$ has 
$\text{Var}(x^{(\theta)}) = 1/4\left( \cosh (2r) - \sinh(2r) \cos2(\phi-\theta) \right) $. 
Along the direction defined by {\color{red} $\theta=\phi$} the variance reads $\min_\theta \text{Var}(x^{(\theta)})= e^{-2r}/4$, showing a reduction below the quantum noise of the vacuum state.
This reduction is often quantified in decibel (dB) from the formula
\begin{align}
S_{\text{dB}} = -10 \log_{10} \frac{ \min_{\theta} \text{Var} (x^{(\theta)}) }{\text{Var}_0 (x) }.
\label{dbsqueezing}
\end{align}
Here, the variance $\text{Var}(x^{(\theta)})$ is calculated with respect to the final state, while $\text{Var}_0 (x) = 1/4$ denotes the variance of the vacuum state, independent of $\theta$.
%Since our protocols typically focus on squeezing the $x=x^{(\theta=0)}$ quadrature, we can often omit the minimization over $\theta$.

\subsection{Control and measurement}

We assume that the bosonic mode under investigation can be controlled and measured in some elementary ways. 
The main operation we consider is a displacement in phase space, as described by the operator
\begin{align}
    D(\alpha) = e^{\alpha a^\dagger - \alpha^* a}  .
    \label{dispop}
\end{align}
This is typically realized by driving the bosonic mode on resonance, e.g with a coherent microwave signal or laser field.
%for cQED \cite{blais2004cavity} and cQAD \cite{gustafsson2014propagating}, or a laser field in trapped ions \cite{wineland1979laser}. 

The measurement we will consider are parity measurements $P=(-1)^{\hat{n}}=P_+-P_-$,  with $ \hat{n} = a^\dagger a $, where 
\begin{subequations}\label{paritymeas}
\begin{align}
P_{+} &  = \sum_{n \in \text{even}} | n \rangle \langle n |  \\
P_{-} &  = \sum_{n \in \text{odd}} | n \rangle \langle n | 
\end{align}
\end{subequations}
represent projectors into the subspaces with even/odd parity, respectively. 
Parity measurements are routinely performed in spin-boson systems, where the bosonic mode is coupled to a two-level (qubit) degree of freedom \cite{LutterbachPRL97,Vlastakis2013,blais2021circuit,von2022parity}. 
In such platforms, the interaction is described by the Jaynes-Cummings model
\begin{align}
H/\hbar  =  \omega_a a^\dagger a + \frac{ \omega_0}{2} \sigma^z + g ( a \sigma^+ + a^\dagger \sigma^- )  ,
\end{align}
where $ \omega_a $ is the frequency of the bosonic mode, $ \omega_0$ is the frequency of the qubit, and $ g $ is the qubit-boson coupling rate.  In the dispersive regime ($ \Delta \gg |g| $) and putting the boson and qubit in the rotating frame with frequency $ \omega_a $, the effective Hamiltonian reads
%
%\begin{align}
%H/\hbar \approx \omega_a a^\dagger a - \frac{|g|^2}{\Delta} a^\dagger a \sigma^z
%\end{align}
\begin{align}
H/\hbar \approx - \frac{|g|^2}{\Delta} \hat{n} \sigma^z  - \frac{\delta}{2} \sigma^z,
\end{align}
where $ \Delta = \omega_a -  \omega_0 $ is the qubit-boson detuning and $ \delta = \Delta + \frac{|g|^2}{\Delta}  $.  Initializing the qubit in the state $ | + \rangle = (|0 \rangle + |1 \rangle )/\sqrt{2} $, letting it interact with the mode according to the dispersive Hamiltonian for a time $t_\text{int}$, and projecting on the qubit in the $ \sigma^x $ basis
results in an effective measurement of the bosonic mode that reads \cite{feng2025quantum} 
%
%\begin{align}
%{\cal M}_{+} (\tau) &  = \sum_n e^{-in \omega t} \cos (n \tau) |n \rangle \langle n | \nonumber \\
%{\cal M}_{-} (\tau) &  = \sum_n e^{-in \omega t} \sin (n \tau) |n \rangle \langle n |
%\end{align}
\begin{align}
{\cal M}_{+}(\tau, \phi)   &  = \sum_n \cos (n \tau + \phi) |n \rangle \langle n | \nonumber \\
{\cal M}_{-} (\tau, \phi)   &  = \sum_n \sin (n \tau + \phi) |n \rangle \langle n | ,
\label{measurementops}
\end{align}
with $\tau = |g|^2 t_\text{int} /\Delta$ and $ \phi = \delta t_\text{int}$. 
These are the projectors corresponding to measuring the qubit in the $\ket{\pm }$ state, respectively.
 For the choice $ \tau = \pi/2, \phi = 0  $,  %and $ \omega t_\text{int} = \pi/2 $
the measurement operators reduce to the even and odd parity projectors given in (\ref{paritymeas}), up to a phase which may be removed by an additional unitary  
$ e^{i \pi a^\dagger a/2 } $.

\section{Squeezing with parity measurements}

\subsection{Basic idea}

Our first goal is to prepare the bosonic mode in a squeezed state by using a measurement-based state preparation protocol that relies only on displacement operations and parity measurements.
To understand the idea behind our approach, we first observe that that the squeezed vacuum (\ref{squeezedstate}) has an even parity, hence it is an eigenstate of the even parity operator
\begin{align}
P_+ |\xi \rangle = |\xi \rangle  .
\end{align}
Second, for an ideal infinitely squeezed state, it is invariant under translations along the anti-squeezed axis
\begin{align}
D( \alpha = -it) |x_0  \rangle & = e^{-i ( a+a^\dagger) t} |x_0  \rangle = e^{-i2 x_0 t} |x_0  \rangle 
%D( \alpha = -t) |p_0  \rangle & = e^{ (a- a^\dagger ) t} |p_0  \rangle = e^{i2 p_0 t} |p_0  \rangle  
\end{align}
where we used Eq.~(\ref{quadeigenstates}) and $ t $ is a real (unitless) parameter. 
Similar relations also hold for squeezing along other directions, but for simplicity we will consider here $x$-squeezed states, without loss of generality. 
Defining the displaced parity measurement as
\begin{align}
P_{\pm} \left( \alpha \right)=D (\alpha)  P_{\pm} D^{\dagger}  (\alpha) ,
    \label{dispparity}
\end{align}
it follows that for displacements along the anti-squeezed axis we have 
\begin{align}
P_+ ( \alpha = it) |x_0  \rangle & = |x_0  \rangle .
%P_+ ( \alpha = t) |p_0  \rangle & = |p_0  \rangle  .
\end{align}
Performing multiple projections along the anti-squeezed axis thus gives
\begin{align}
\prod_{m=1}^{M} P_+ ( \alpha = it_m) |x_0  \rangle = |x_0  \rangle .
\label{measseq}
\end{align}
Here, $ M $ is the number of measurements that are performed. We note that it is also possible to write this in a Hamiltonian formulation (see Appendix \ref{sec:hamsqueezed}).  

When applied to an arbitrary initial state {\color{red} $\ket{\psi_0}$,} the parity measurement (\ref{dispparity}) results in a projection onto the even/odd parity subspace. 
For non-commuting displaced-parity measurements (\ref{dispparity}), the sequence of $M$ measurements in (\ref{measseq}) results in a series projections onto an increasingly narrow subspace. For $M$ sufficiently large, we expect that any initial state {\color{red} $\ket{\psi_0}$,} converges to a squeezed state.

\subsection{Squeezing protocol with postselection}

After having introduced the basic idea behind our squeezing protocol, we now formulate it more rigorously.
Let us define the projector for a sequence of even parity projections along the desired squeezing orientation $\theta $ as
\begin{align}
{\cal P}_\theta(\vec{t})  = \prod_{m=1}^{M} P_+ ( \alpha = i e^{i\theta} t_m )
\label{bigprojdef}
\end{align}
where $ \vec{t} = (t_1, t_2, \dots, t_{M}) $ is a sequence of adimensional time parameters. 
The order of operators in the product proceeds from right to left, such that the operator with $t_1$ is applied first and $t_{M} $ last to the state.  
We note that, since the measurement sequence (\ref{bigprojdef}) involves only the even parity outcomes, to implement this operator in practice postselection is required.    

An equivalent expression for the operator (\ref{bigprojdef}) is
\begin{align}
{\cal P}_\theta(\vec{t}) & =  D (i e^{i\theta} t_{M}) \left[ \prod_{m=1}^{M} P_+ D(-i e^{i\theta} \Delta t_m)  \right] 
\label{bigprojdef2}
\end{align}
where the difference between displacements is
\begin{align}
\Delta t_m = t_m - t_{m-1} ,
\end{align}
and we take $ t_0 = 0 $. 
In the above, we have used the identities for the displacement operators $ D^\dagger (\alpha) = D( - \alpha) $ and the formula for combining displacements \cite{gerry2023introductory}
\begin{align}
D(\alpha) D (\beta) = e^{i \text{Im} (\alpha \beta^*)} D(\alpha + \beta) .
\label{dispdisp}
\end{align}
Eq. (\ref{bigprojdef2})  is simpler from an implementation standpoint, as it simplifies two consecutive displacements. 

We then claim that for a suitable choice of displacements $ \vec{t} $
\begin{align}
| {\cal P}_\theta(\vec{t}) \rangle := 
\frac{{\cal P}_\theta(\vec{t}) |0 \rangle}{\sqrt{ p_{\text{suc}} (\theta, \vec{t}) } } \approx | x^{(\theta)}_0 = 0  \rangle  ,
\label{claim}
\end{align}
where the state that is approached is the squeezed state (\ref{thetaeigen}).  
Here we defined the success probability
\begin{align}
p_{\text{suc}} (\vec{t}) = \langle 0 | {\cal P}^\dagger _\theta(\vec{t}) {\cal P}_\theta(\vec{t}) |0 \rangle .
\label{totalsuccessprob}
\end{align}
The success probability of the $m$th measurement can be written as 
\begin{align}
p^{\text{suc}}_m = \frac{p^{\text{suc}} (\vec{t}_m)}{p^{\text{suc}} (\vec{t}_{m-1})}
\label{sucprobn}
\end{align}
where $ \vec{t}_m = (t_1, t_2, \dots, t_{m}) $ is $ \vec{t} $ truncated to the $ m$th measurement.  

Due to the probabilistic nature of the parity measurement outcomes, the performance of our protocol is highly dependent upon the choice of $ \vec{t} $. In the next section, we show some concrete choices of $ \vec{t} $ which have a good performance in terms of generating squeezed states.

\section{Numerical evaluation of parity measurement based squeezing}

\subsection{Method}

We numerically simulate our protocol to illustrate its performance for generating of squeezed states.
To include loss, which is the main decoherence channel in many experimental realizations, we express Eq.~(\ref{claim}) in the density matrix formalism. For this, we note that each measurement step corresponds to the operation
\begin{align}
\rho^{(m)} = \frac{ \sum_k E_k P_+ ( \alpha = i e^{i \theta} t_m ) \rho^{(m-1)} P_+^\dagger ( \alpha = i e^{i \theta} t_m ) E_k^\dagger }{\text{Tr} ( P_+ ( \alpha = i e^{i \theta} t_m ) \rho^{(m-1)} P_+^\dagger ( \alpha = i e^{i \theta} t_m )} ,
\label{measurementcycle}
\end{align}
where $ \rho^{(m)} $ is the state after $ m $ measurements.
Here we introduced the Kraus operator for bosonic loss
\begin{align}
E_{k} = \sqrt{\frac{\left( 1 - \eta\right)^{k}}{k!}}\sqrt{\eta}^{a^{\dagger}a}a^{k}
, \label{14}
\end{align}
where $k$ is the number of lost bosons, and $\epsilon = 1- \eta$ denotes the probability of losing a photon within each measurement step.

%When including bosonic loss we use the density matrix formalism with the state initialized in $  | 0 \rangle \langle 0 | $.  Then, for each measurement we apply the operation

We note that the bosonic loss operator is applied after {\it each} measurement in Eq.~(\ref{measurementcycle}).
Thus after $ M $ measurements, there is a probability of $ 1- \eta^M $ that at least one boson has been lost.
Thus for any $ \epsilon > 0 $, the probability that no error has occurred is exponentially diminishing.
We choose this type of model as it is closest to experimental reality, where each measurement requires finite time to be performed.

To characterize the state resulting from our protocol, we plot its Wigner function.
This is also what is typically done experimentally, using the fact that the Wigner function can be expressed as the expectation value of the displaced parity operator \cite{royer1977wigner}
\begin{align}
   W ( \alpha) = \frac{2}{\pi} \text{Tr} \left[ \rho (P_+(\alpha) - P_-(\alpha)) \right] ,
    \label{wignerfunc}
\end{align}
with the displaced parity operators defined in (\ref{dispparity}).  

Our simulations are expressed in the Fock basis using a cutoff at $\ket{n_{\text{cut}}}$, where $ n_{\text{cut}} $ is chosen large enough to ensure convergence.

\subsection{Measurement sequence}

We start discussing a suitable choice for the measurement parameters $ \vec{t} $, which determine the measurement sequence to be performed.  A simple yet effective ansatz for generating squeezing is
\begin{align}
    t_m = (-1)^{m-1} t_{\max} \left(1-\frac{  \lfloor (m-1)/2 \rfloor}{ \lfloor (M-1)/2 \rfloor} \right)  .
    \label{symmetric}
\end{align}
Here, $ \lfloor \cdot  \rfloor $ is the floor function and we only consider odd $ M $.  
The measurement pattern is illustrated in Fig. \ref{fig1}.  
The first measurement occurs at $ \alpha = t_{\max} e^{i \theta} $, followed by a measurement at $ \alpha = - t_{\max} e^{i \theta} $.  
The next two measurements are symmetrically distributed towards the origin but with smaller amplitudes, until the final measurement is at the origin.   

From an intuitive point of view, the measurement sequence (\ref{symmetric}) is a reasonable choice for the following reasons. 
First, it ensures that the final measurement occurs at the vacuum $ t_M = 0 $, which enforces the prepared state to have an even parity. 
This is consistent with any squeezed state (\ref{squeezedstate}) and it thus maximize the fidelity with respect to this class of states. 
Second, the measurements are symmetrically distributed about the origin along the anti-squeezing axis, reflecting the fact that squeezed vacuum states (\ref{squeezedstate}) are invariant under a rotation $ e^{i \pi a^\dagger a } $.  

Other choices of measurement patterns distributed along a line also produce squeezing, but (\ref{symmetric}) has a good balance between performance, simplicity, and success probability.
As an example, we show in Fig. \ref{fig2}(b) the Wigner function for the same number of measurements and amplitude, comparing (\ref{symmetric}) and a linear sequence 
\begin{align}
 t_m  = t_{\max} \left(1-\frac{m-1}{M-1} \right)  .
\label{linear}
\end{align}
We see that the ansatz (\ref{symmetric}) has a better performance in terms of generating a squeezed state. In Supplementary Information Sec. I we show the performance of alternative ansatz choices which can show superior squeezing, at the cost of a more complex sequence and a lower success probability. 
Henceforth,  we discuss the performance within the choice (\ref{symmetric}).

\begin{figure}[t]
\includegraphics[width=\linewidth]{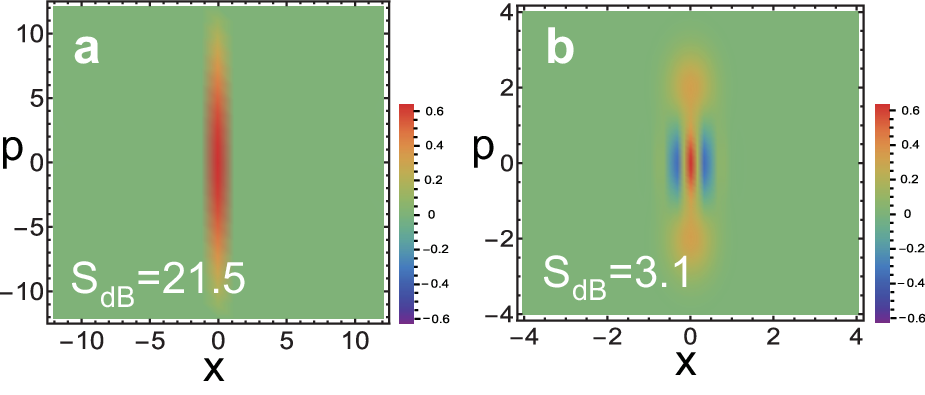}
\caption{Wigner functions of the squeezed state generated according to (\ref{claim}). We compare two measurement sequences: (a) symmetric ordering (\ref{symmetric}) and  (b) linear ordering (\ref{linear}). Wigner functions were calculated using (\ref{wignerfunc}).    We use $ t_{\max} = 2 $ and $ M = 11 $ for both cases.  
Cutoffs of (a) $ n_{\text{cut}} = 601 $ and (b) $ n_{\text{cut}} = 101 $ were used. 
\label{fig2}  }
\end{figure}

\subsection{Convergence to an infinitely squeezed state}
 
\begin{figure}[t]
\includegraphics[width=\linewidth]{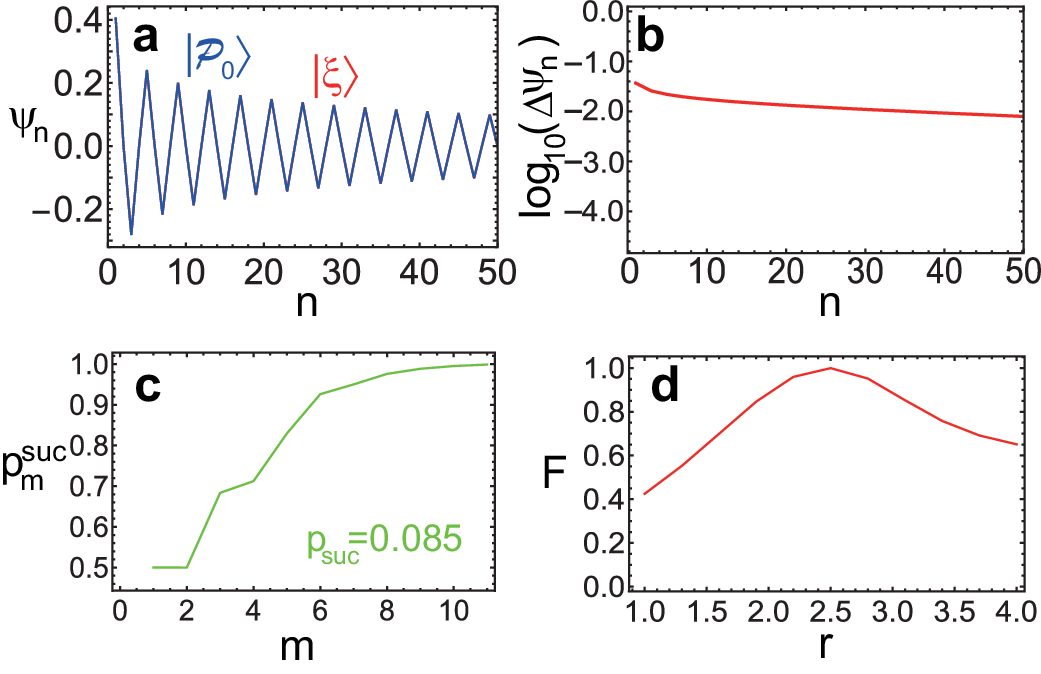}
\caption{Convergence of the state (\ref{claim}) towards the infinitely squeezed state.  We use the measurement sequence (\ref{symmetric}) with $ M = 11 $, $ t_{\max} = 2 $, $ \theta = 0 $.  
(a) Wavefunction $ \psi_n = \langle n | {\cal P}_0 (\vec{t}) \rangle $.  A comparison to the squeezed state with $ \xi = 2.5 $ is shown.  (b) 
The difference between the wavefunctions in (a), quantified by  $ \Delta\psi_n= |\langle n|{\cal P}_0(\vec t)\rangle -\langle n |  \xi=2.5 \rangle| $.}  (c) Success probability $ p_n^{\text{suc}} $. (d) Fidelity with squeezed states $ F = | \langle \xi = r | {\cal P}_0 (\vec{t}) \rangle |^2 $. We use a Fock state truncation of $ n_{\text{cut}} =601 $.  
\label{fig3}   
\end{figure}

We first demonstrate our claim (\ref{claim}), where for a suitable measurement pattern the state approaches an infinitely squeezed state.  
Figures \ref{fig2}(a) and \ref{fig3} shows some representative results with $ M = 11 $ measurements.  
From Fig. \ref{fig2}(a) we see an excellent amount of squeezing, reaching $ S_{\text{dB}} = 21.5 $ dB.  
In  Figure \ref{fig3}(a)(b), we show the Fock state amplitudes and compare it to a squeezed state with $ \xi $ chosen to most closely approximates our final state.  
We see a nearly indistinguishable amplitude distribution, corresponding to a high fidelity.
Fig. \ref{fig3}(c) shows the success probabilities for each measurement Eq.~(\ref{sucprobn}).  
Note how the first few measurements have typically $ p_m^{\text{suc}} \approx 0.5 $, but generally these converge towards $ p_m^{\text{suc}} \rightarrow 1 $ as $m$ increases.  
For this reason, the total success probability does not follow a simple exponential decay with $ M $; but it rather depends on the specific details of the sequence, including initial state and measurement points. 
For this example we find $ p_{\text{suc}} \approx 8.5\% $, which is significantly better than $1/2^M \approx 0.05 \%$.
Fig. \ref{fig3}(d) shows the fidelity with respect to the squeezed state (\ref{squeezedstate}) for a range of squeezing parameters.  
We see that the final state is very closely approximated ($ F = 0.998 $) by a squeezed state with $ \xi = 2.5 $.

Generally, we find that the larger the number of measurements $M $, the better the convergence towards the infinitely squeezed state is. 
The numerical limitation is however that an infinitely squeezed state has a non-negligible population of states with very high Fock number, eventually exceeding our Hilbert space truncation at $\ket{n_\text{cut}}$.
To see this, set $ r\rightarrow \infty $ in (\ref{squeezedstate}) and we have 
\begin{align}
|x_0 = 0 \rangle & \propto \sum_{n=0}^\infty  (-1)^n \frac{\sqrt{(2n)!}}{2^n n!} |2 n \rangle \nonumber \\
& \approx \sum_{n=0}^\infty\frac{(-1)^n }{(\pi n)^{1/4}} |2 n \rangle ,
\end{align}
which is slowly converging in Fock number.  With a larger number of measurements $ M $, the truncation effects start to play a role, effectively limiting the level of squeezing that can be obtained.  
For example, for $ M = 21 $ measurements with $ t_{\max} =  5 $ and a truncation $n_{\text{cut}} = 1001 $, we obtain a squeezing of up to $ S_{\text{dB}} = 26\,$dB.  
It is however likely the actual squeezing level is even larger. 

%\mf{Oh, but I would have expected a different statement: for example, if I truncate to 1000 I can consider only up to $M=$XX measurement in order to get a squeezing that is real and not limited by truncation. Can you make some plot for this??}  

\subsection{Small number of measurements}

Actual experimental implementations of our protocol are limited in the total number of measurements $M$, due to either constraints on time or tradeoff with losses. 
Therefore, it is of interest to know what level of squeezing can be attained for a small, finite, number of measurements.  
As mentioned above, in the limit of large number of measurements $ M $ the prepared state convergences towards a highly squeezed state.  
For a few number of measurements, however, we have to deal with a dynamical regime of the protocol, before convergence.  
For this reason, there is a strong dependence on the choice of initial state and $ t_{\max} $. 
In the following, we investigate this dependence, always considering the ideal scenario without losses. 
The effect of losses will then be discussed in detail in the next section.

Figure \ref{fig4} shows the effect of different choices of $ t_{\max} $ for a fixed small number of measurements $ M = 3 $.  
In Figures \ref{fig4}(a)-(c), we show in the Fock basis the final states obtained from three different $ t_{\max} $, together with the closely matching squeezed state.  
Importantly, we see that there is an optimal value of $ t_{\max} $ which maximizes the amount of squeezing.  
In fact, choices of $ t_{\max} $ that are too small result in almost no squeezing, since all projections occur near the origin in phase space.  
This is clear for $ t_{\max} =0  $, where all the displaced parity measurements are simply $ P_+ $ and the final result is still the vacuum state.  
For a choice of $ t_{\max} $ that is too large, the obtained state shows a large disagreement with the family of squeezed states, mainly due to the lack of convergence for $M$ small.  
At the optimal $ t_{\max} $, the state closely approximates a squeezed state with a large fidelity of $F \approx 0.99 $.  
In Fig. \ref{fig4}(d) we evaluate the squeezing with (\ref{dbsqueezing}) and we obtain a maximum value of $ S_{\text{dB}} \approx 8.9 $. 
The Wigner function for the optimized case is shown in Fig. \ref{fig4}(e), where we see the expected squeezed distribution.  
Some minor Wigner negativity is seen due to the finite number of measurements.  
The total success probability is shown in Fig. \ref{fig4}(g).   
For the optimum $ t_{\max} $, $ p_{\text{suc}} = 0.32 $.
Thus, as long as the maximum displacement $ t_{\max} $ is optimized, even a very small number of measurements can result in the preparation of a state with a significant amount of squeezing with large success probability.

\begin{figure}[t]
\includegraphics[width=\linewidth]{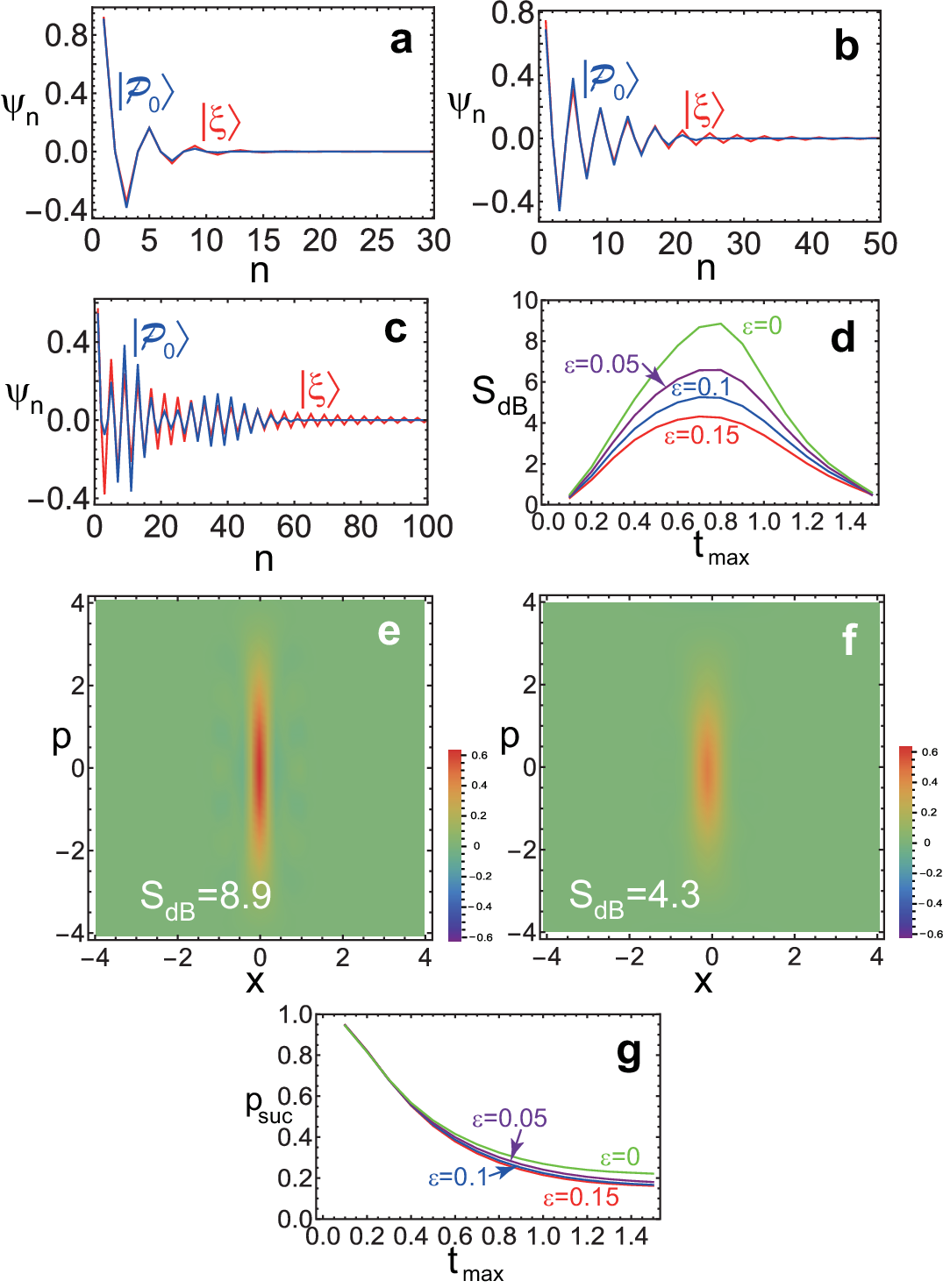}
\caption{Effect of the amplitude of displacement $ t_{\max} $ with a small number of measurements. 
We evaluate (\ref{claim}) with the measurement sequence (\ref{symmetric}) with $ M = 3 $ and $ \theta = 0 $. 
Wavefunction $ \psi_n = \langle n | {\cal P}_0 (\vec{t}) \rangle $ for (a) $ t_{\max} = 0.4 $; (b) $ t_{\max} = 0.8 $; (c) $ t_{\max} = 1.5 $.  A comparison to the squeezed state with  (a) $ \xi = 0.6 $; (b) $ \xi = 1.2 $; (c) $ \xi = 1.8 $  is shown. (d) The squeezing (\ref{dbsqueezing}) as a function of $ t_{\max} $.
The Wigner distribution using the optimal $ t_{\max} = 0.8 $ for (e) $ \epsilon = 0 $ and (f) $ \epsilon = 0.15 $.  (g)  The total success probability (\ref{totalsuccessprob}) as a function of $ t_{\max} $.  The photon loss probability $ \epsilon = 1 - \eta $ are shown for (d)(e). For (a)(b)(c)(e) the photon loss probability is $ \epsilon = 0 $.   We use a Fock state truncation of $ n_{\text{cut}} =201 $ for the pure state calculations and $ n_{\text{cut}} =51 $ for the mixed state calculations.  \label{fig4}   }  
\end{figure}

% For the optimal displacement amplitude \(t_{\max}=0.8\), the success probability is \(p_{\rm suc}\simeq 0.32\). 

% The photon loss probabilities \(\epsilon=1-\eta\) are indicated in (d) and (g). 

\subsection{Effect of losses}
\label{sec:loss}

In many experimental platforms of interest for our work (e.g. photonic, phononic), the dominant decoherence mechanism is energy relaxation, meaning the loss of excitations towards the environment.
Hence, it is of crucial importance to investigate whether squeezing may be realized under realistic circumstances that include inevitable imperfections.  
To this end, we repeat the analysis presented in the previous section including losses in the dynamics, using the density matrix formulation (\ref{measurementcycle}).  

In Fig. \ref{fig4}(f) we show the effect of losses on the Wigner function, always for $ M = 3 $ measurements. 
Compared to the lossless case, Fig. \ref{fig4}(e), it is visible that the squeezing is diminished, with a broader distribution having a lower amplitude at the origin. 
The squeezing reduction is also visible in Fig. \ref{fig4}(d), where for 15 \% loss per measurement the squeezing reduces to 4.3\,dB.   
Interestingly, the optimal value of displacement $ t_{\max} $ appears to be largely unaffected by the presence of loss.  
The total success probability shown in Fig. \ref{fig4}(g) is also largely unaffected by the presence of losses.  

Figure \ref{fig5} shows the effect of losses for the case of a larger number of measurements, $ M = 11 $.  For the Fock state populations shown in Fig. \ref{fig5}(a), one major difference with the pure state case (see Fig. \ref{fig3}(a)) is that high photon numbers are suppressed.  
This occurs due to the bosonic amplification of losses for larger Fock number states \cite{scully2009super,byrnes2011accelerated}, which come as an advantage in numerical simulations since it allows us to set the Hilbert space truncation to a lower $ n_{\text{cut}} $ value without affecting the accuracy of the results.  
Evaluating the squeezing, we note that the presence of losses causes a significant drop in the squeezing level, see Fig. \ref{fig5}(b). 
This is also due to the fact that for a larger number of measurements $M$ losses are more pronounced, since $ \epsilon $ is the loss probability {\it per} measurement.  
For example, for $ \epsilon = 0.01 $, this corresponds to a total loss probability of $ 1 - \eta^M \approx 0.1 $ for $ M = 11$. 
The introduction of loss creates an optimal value of $ t_{\max} $ around $ t_{\max}\approx 1 $, whereas for $ \epsilon = 0 $ the optimal value seems to occur outside our simulation parameter regime, for $ t_{\max}> 2 $. 
The total success probabilities, which we show in Fig. \ref{fig5}(c), are also more affected by losses compared to the case with $M=3$ shown in \ref{fig4}(g). 
This is because here the loss probability is higher, and thus also the probability to accidentally flip the parity of the state during the protocol.  
Finally, we show in Fig. \ref{fig5}(d) the Wigner function of the resulting state. 
While not attaining the impressive squeezing of the lossless case, see Fig. \ref{fig5}(b), considerable squeezing may still be obtained in the presence of losses.

\begin{figure}[t]
\includegraphics[width=\linewidth]{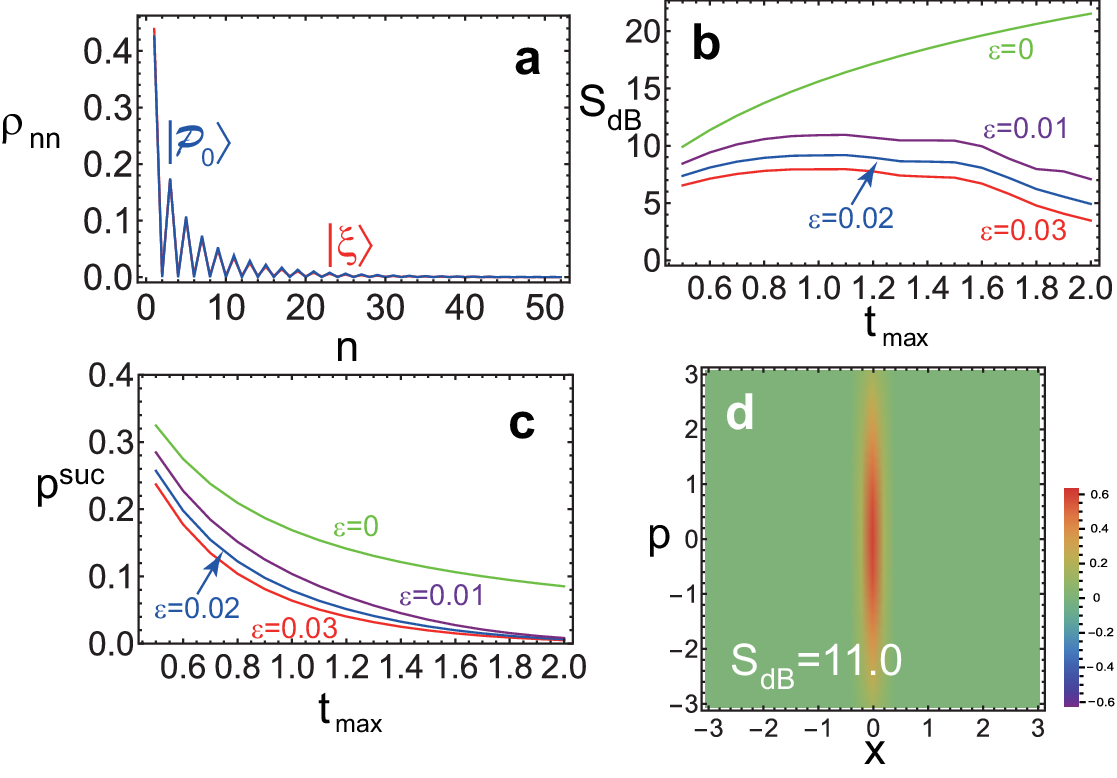}
\caption{Effect of bosonic loss on a larger number of measurements $ M = 11 $.  (a) The diagonal density matrix elements $ \rho_{nn} = \langle n | \rho |n \rangle $ versus the Fock number for $ t_{\max} = 1.1$, $ \epsilon = 0.01 $.  The squeezed state that is the closest match to the obtained state with $ \xi = 1.5 $ is shown for comparison. (b) The attained squeezing for various $ t_{\max} $ and the loss probabilities $ \epsilon $ as shown.  (c) The total success probability as a function of $ t_{\max} $ for the loss probabilities $ \epsilon $ as marked.  (d) The Wigner function of the final state for $ t_{\max} = 1.1$, $ \epsilon = 0.01 $.  For pure state ($\epsilon = 0 $) calculations $ n_{\text{cut}} = 601 $ is used, for mixed state ($\epsilon > 0 $) calculations  $ n_{\text{cut}} = 51 $ is used.  
\label{fig5}   }  
\end{figure}

\subsection{Scaling with the number of measurements}
\label{scaling}

In the lossless case, greater squeezing can be attained by increasing the number of measurements $ M $, since the projection sequence (\ref{measseq}) increasingly projects the initial state towards the infinitely squeezed state.  When including loss, there is a trade-off, as each measurement introduces more decoherence.  Here we examine how our protocol scales as the number of measurements $M $ is increased.  

Figure \ref{A1} shows the scaling of our protocol (\ref{measurementcycle}) with the number of measurements $M$. 
For each $(M,\epsilon)$ pair, we optimize $t_{\max}$ with respect to the squeezing of the final state. 
In the absence of loss, the optimal $t_{\max}$ grows with $ M $, 
corresponding to the larger variance of the state in the anti-squeezed direction.  As expected, this results in higher squeezing for larger $ M $.  
When including photon loss, the optimal $t_{\max}$ remains at much lower levels, as also seen in Fig. \ref{fig5}(b).  The squeezing remains high for small $\epsilon $, but decreases as more loss is introduced, especially for longer measurement sequences. Interestingly, the decrease in squeezing as $ M $ is increased is not very severe, with only a gradual decline.  Thus as long as there is a sufficient number of measurements to produce a high squeezing level for a given $ \epsilon $ (for the parameters in Fig.\ref{A1}(b), $ M \sim 5 $), there is only a weak dependence for a larger number of measurements.

\begin{figure}[t]
\includegraphics[width=\linewidth]{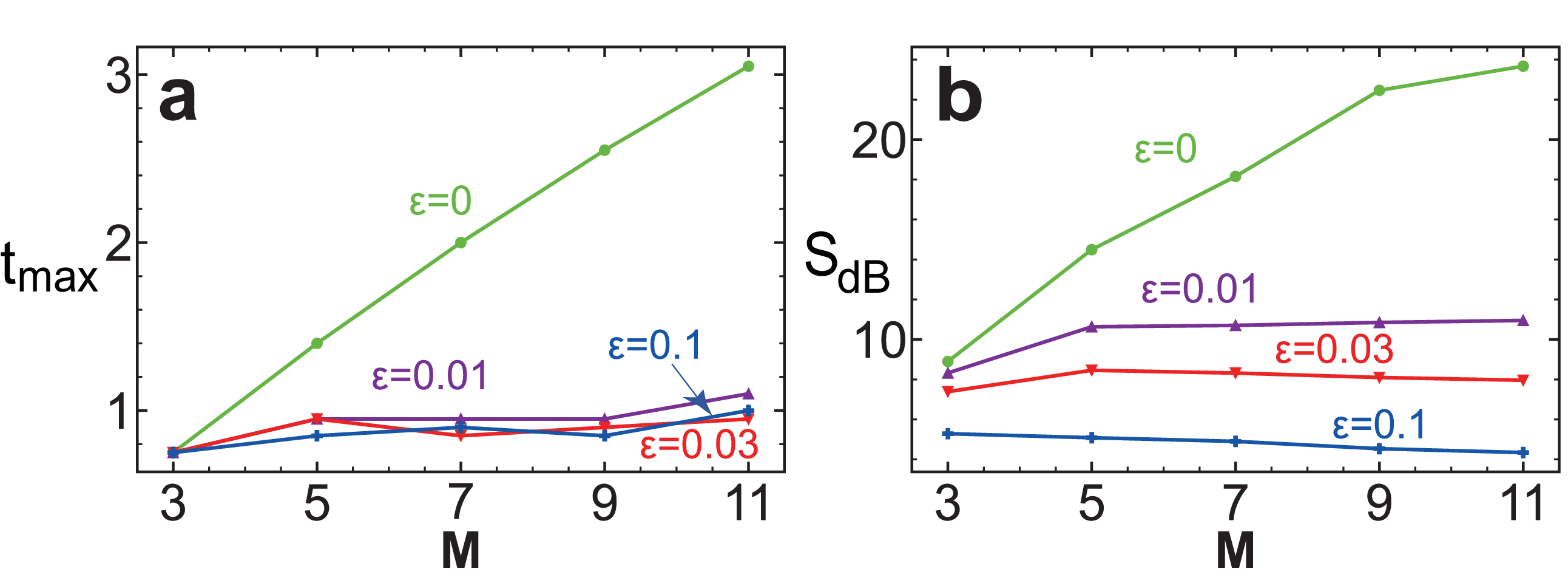}
\caption{Scaling of the optimized displacement $ t_{\max} $ and squeezing. (a) Optimized displacement $t_{\max}$ as a function of the number of parity measurements $M$ for different photon-loss probabilities $\epsilon$. For each pair $(M,\epsilon)$, $t_{\max}$ is optimized by maximizing the final squeezing.
    (b) Squeezing of the final state evaluated at the optimized $t_{\max}$.  
\label{A1}   }  
\end{figure}

\section{Point reflection based state generation}

 In this section, we show that the parity measurements more generally create point reflections in phase space.  We illustrate this method by creating  multi-component Schrodinger cat states and GKP states, which play a crucial role for error-protected quantum information encoding and quantum metrology.

\subsection{Multi-component cat states}

The key observation we take inspiration from is the fact that even/odd cat states $\ket{\alpha}\pm\ket{-\alpha}$ can be prepared by performing parity measurements on the coherent state $\ket{\alpha}$ {\color{red} \cite{brune1992manipulation}.}
This is because a coherent state can be written as the sum of even/odd cat states with a well defined parity, $\ket{\alpha} \simeq (\ket{\alpha}+\ket{-\alpha})+(\ket{\alpha}-\ket{-\alpha})$, meaning that a parity measurement results in a projection onto one of these two cat states.
More formally, we can write
\begin{align}
P_{\pm} | \alpha \rangle = \frac{1}{2} ( | \alpha \rangle \pm | - \alpha \rangle ) , 
\label{catparity}
\end{align}
which can be verified using (\ref{coherent}). Measurement-induced cat state generation was experimentally demonstrated in in numerous works \cite{deleglise2008reconstruction,vlastakis2013deterministically,wang2016schrodinger,pan2023protecting}.  

From a geometrical perspective, we can imagine that a parity measurement at the origin results in a point reflection around the origin, such that for a coherent state $\ket{\alpha}$ a new coherent state $\ket{-\alpha}$ is spawned.
Therefore, by iterating displacements and parity measurements according to 
\begin{align}
| {\cal C} (\vec{\alpha})  \rangle = \left[  \prod_{m=1}^M P_+ D(\alpha_m) \right]  | 0 \rangle
\label{multicat}
\end{align}
various superpositions of coherent states, i.e. multicomponent cat states, can be generated. 
Here, $ \vec{\alpha} = (\alpha_1, \alpha_2, \dots, \alpha_M) $ specifies the displacement sequence that is performed. 
Performing $ M $ measurements produces a superposition of up to $ 2^M $ coherent state, due to each measurement doubling the number of coherent states according to Eq.~(\ref{catparity}). 
Note that, this is different from the squeezing sequence Eq.~(\ref{bigprojdef}), as the displaced parity measurements involve two displacements, c.f. Eq.~(\ref{dispparity}).  

The result of Eq.~(\ref{multicat}) can be evaluated by combining Eq.~(\ref{catparity}) and 
Eq.~(\ref{dispdisp}).  
The latter equation involves a phase, coming from the Baker–Campbell–Hausdorff (BCH) formula, hence Eq.~(\ref{multicat}) will produce a superposition of coherent states which, in general, will involve nontrivial phase factors.  
For example, for $ M =2 $ measurements we have
\begin{align}
| {\cal C} (\vec{\alpha}) \rangle = & \frac{1}{4} \Big( e^{i \phi} | \alpha_1 + \alpha_2 \rangle + e^{-i \phi}   | \alpha_1  - \alpha_2 \rangle  \nonumber \\
& +e^{-i \phi}  | - \alpha_1 +  \alpha_2 \rangle + e^{i \phi}  | - \alpha_1  - \alpha_2 \rangle  \Big)
\label{catseq}
\end{align}
where $ \phi = \text{Im} ( \alpha_2 \alpha_1^*)  $.

By choosing the displacements $ \vec{\alpha} $ appropriately, some interesting states can be generated.  
In Figure \ref{fig1}(b), we show a scheme for making a square lattice of coherent states.  
The measurement sequence corresponds in this case to 
\begin{align}
\alpha_m = i^{\tfrac{1 + (-1)^m}{2} }  2^{\lfloor (m-1)/2 \rfloor -1 } \delta .
\end{align}
Each measurement doubles the lattice size, thus resulting very quickly in the creation of large lattices.  
Figure \ref{fig6}(a) shows the Wigner function for $ M = 2 $ measurements.  
Due to the BCH phases, in order to obtain a superposition with an even phase, it is convenient to choose a lattice spacing of $ \delta = 2 \sqrt{\pi} $.  
Since the BCH phases can be also interpreted as a Berry phase, if the area swept out by multiple displacements is a multiple of $ 2 \pi $ then the phases are removed. 
Further phase control can be performed by using selective number dependent arbitrary phase (SNAP) methods \cite{heeres2017implementing}.  
For example, if the phase of one coherent needs to be adjusted, then it can be displaced to the vacuum, then a SNAP operation can selectively adjust the phase of the $ | 0 \rangle $ state.

\begin{figure}[t]
\includegraphics[width=\linewidth]{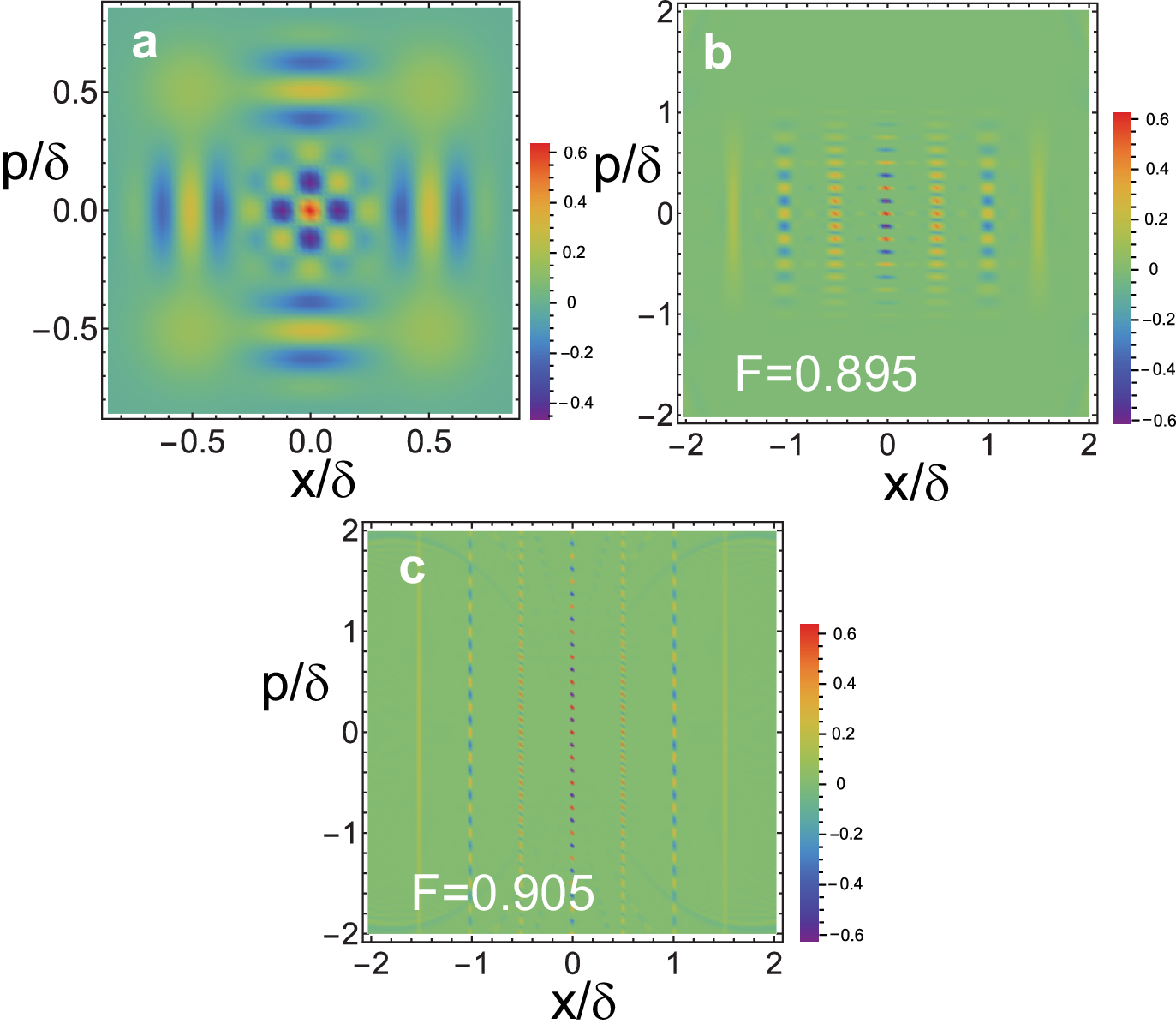}
\caption{Wigner functions for (a) multi-component cat states (\ref{multicat}) and (b)(c) GKP codeword $ |1_L \rangle $ given in  (\ref{gkpstate}).  To generate the state in (a), the sequence (\ref{multicat}) is used with $ M =2 $ and $ \vec{\alpha} = \{ \delta /2, i \delta /2 \} $.   The generated state has a fidelity of $ F = 1 $ with respect to the state (\ref{catseq}) with $ \phi = 0 $. For (b), we use the same parameters as Fig. \ref{fig4}(e) ($M=3, t_{\max}=0.8, \theta = 0 $) to generate the initial squeezed vacuum state, before applying a further $M =2 $ measurements in (\ref{gkpstate}). For (c), we use the same parameters as Fig. \ref{fig2}(a) ($M=11, t_{\max}=2, \theta = 0 $) to generate the initial squeezed vacuum state, before applying a further $M =2 $ measurements in (\ref{gkpstate}). Fidelities comparison to the approximate GKP state (\ref{approxgkp1l}) after optimization of $ r, \sigma_{\text{env}} $ are shown in the figures. Fock state truncation of $ n_{\text{cut}} = 201 $ is used in (a)(b)  and $ n_{\text{cut}} = 301 $ for (c).  A lattice spacing $ \delta = 2 \sqrt{\pi} $ is used for all calculations.  
\label{fig6}   }  
\end{figure}

\subsection{GKP states}

Another paradigmatic family of states we can prepare with our protocol are Gotesmann-Knill-Preskill (GKP) states, which largely used for encoding logical qubit states into bosonic modes.
The GKP codewords are defined as  \cite{gottesman2001encoding}
\begin{subequations}\label{gkpdefs}
\begin{align}
|0_L \rangle & \propto \sum_{j=-\infty}^\infty  |x_0 = 2j \sqrt{\pi} \rangle \\
|1_L \rangle & \propto \sum_{j=-\infty}^\infty  |x_0 = (2j+1) \sqrt{\pi} \rangle .
\end{align}
\end{subequations}
These are a superposition of infinitely $ x $ squeezed states with lattice spacing $ \delta = 2 \sqrt{\pi} $.  Since ideal GKP codewords have infinite energy, experimental realization rely of their approximation using finitely squeezed states and an overall Gaussian envelope in the sums appearing in Eqs.~(\ref{gkpdefs}) \cite{weigand2018generating}.  
%For this reason, Eq.~(\ref{gkpdefs}) is defined up to a proportionality constant to account for normalization. 

Following this observation, approximated GKP states may be prepared by combining our protocols for generating squeezed states and superpositions of coherent states.  
The protocol we propose is shown in Fig. \ref{fig1}(c). First, an $ x $-squeezed vacuum state is generated using the measurement sequence ({\color{red}\ref{bigprojdef}}).  
Then, it is displaced by half a period $ \delta /2 = \sqrt{\pi} $ and the projection $ P_+ $ is performed through parity measurement. 
Let us remember that, according to Eq.~(\ref{catparity}), this measurement creates a point reflection of a coherent states.  
Since an arbitrary state $ | \psi \rangle $ can be decomposed in terms of coherent states, this creates a superposition according to 
\begin{align}
P_+ | \psi \rangle &  = \frac{1}{2\pi} \int d^2  \alpha  \langle \alpha | \psi \rangle ( | \alpha \rangle  +  |-  \alpha \rangle ) \nonumber \\
& = \frac{1}{2} ( | \psi \rangle + | - \psi \rangle ) ,
\end{align}  
where we applied the identity operator  $ I = \int d^2  \alpha |\alpha \rangle \langle \alpha |/ \pi $   and defined
\begin{align}
 | - \psi \rangle :=  \frac{1}{\pi} \int d^2  \alpha  \langle \alpha | \psi \rangle  |-  \alpha \rangle
\end{align} 
as the point reflected state.  
Hence the parity operator creates a superposition of the original state and its point reflected state.  
Since the $ x $-squeezed vacuum state is symmetric around the origin, applying the sequence given in Fig. \ref{fig1}(c) approximately produces the GKP codeword $ |1_L \rangle $.  
The state can be written as
\begin{align}
|{\cal G} (\vec{t}, \vec{\alpha} )  \rangle :=  \left[  \prod_{m=1}^M P_+ D\left(\frac{ m \delta}{2} \right) \right]  | {\cal P}_0 ( \vec{t} ) \rangle  \approx | 1_L \rangle ,
\label{gkpstate}
\end{align}
where the initial $ x $-squeezed vacuum state is generated using Eq.~(\ref{claim}).  

Figure \ref{fig6}(b)(c) shows the Wigner function for the states resulting from Eq.~(\ref{gkpstate}). 
The two panels correspond to states obtained by starting the protocol from either the squeezed state prepared with $M =3 $ measurements (Fig. \ref{fig6}(b)), or with $ M=11$ measurements (Fig. \ref{fig6}(c)), respectively.  
We see that the pattern characteristic of GKP states is well reproduced by our approach (c.f. Ref. \cite{lee2025photonic}).  The Wigner function for the $ |1_L \rangle $ GKP state is characterized by a positive distribution at $ \langle x \rangle = (2j+1)\sqrt{\pi} = (\pm \delta/2, \pm 3 \delta/2, \dots) $. Between these, there is a pattern of positive and negative peaks at $ \langle x \rangle = 2j\sqrt{\pi} = (0, \pm \delta , \dots) $, matching the patterns seen in Fig. \ref{fig6}(b)(c).  The higher level of squeezing for the $ M = 11 $ measurement case produces a wider distribution in the $ p $-direction as expected, but also greater squeezing in the $ x $-direction of the lattice points in the Wigner distribution.   We compare the resulting state with the approximate GKP state with a Gaussian envelope, defined as
\begin{align}
    |1_L, \xi, \sigma \rangle \propto  \sum_{j=-\infty}^\infty e^{ - \frac{((2j+1)\sqrt{\pi} )^2}{2 \sigma_{\text{env}}^2} } D ((2j+1)\sqrt{\pi})  S(r) |0 \rangle  .
    \label{approxgkp1l}
\end{align}
up to a normalization factor. After optimizing the parameters $ r $ and $ \sigma_{\text{env}} $ we find that the resulting states have a fidelity $ F \approx 0.99$.

\subsection{Effect of losses}

We analyze the sensitivity of the cat and GKP state generation to loss. We use the same procedure as shown in (\ref{measurementcycle}),
where loss is applied after each displaced parity measurement. The results are shown in      
Fig.~\ref{fig7}. The cat state preparation retains a comparatively large success probability $p_{\text{suc}}$ over the parameter range considered, with only a weak dependence on loss.
For the GKP state, the success probability decreases with increasing displacement amplitude and becomes more sensitive to loss for larger displacements.  In particular, the longer $M=11$ sequence becomes more strongly affected, with success probabilities dropping with added loss.  The Wigner functions for the cat state show generally similar features under weak loss, with a reduction in contrast of the negative regions.  For the GKP state, the fine features are still visible but again with a reduced amplitude.  The longer $M=11$ GKP sequence accumulates more photon loss and results in a considerably  lowered fidelity.   

Generally, we see that loss tends to more strongly affect the cat and GKP state generation than the squeezed states.  This is expected due to the higher sensitivity of these states to decoherence, as they possess highly non-classical features.  The nature of the state generation is also different in the case of cat and GKP states, where the measurements induce dynamical point reflections in phase space, rather than enforcing a common eigenstate as in the squeezed state case.

\begin{figure}[t]
\includegraphics[width=\linewidth]{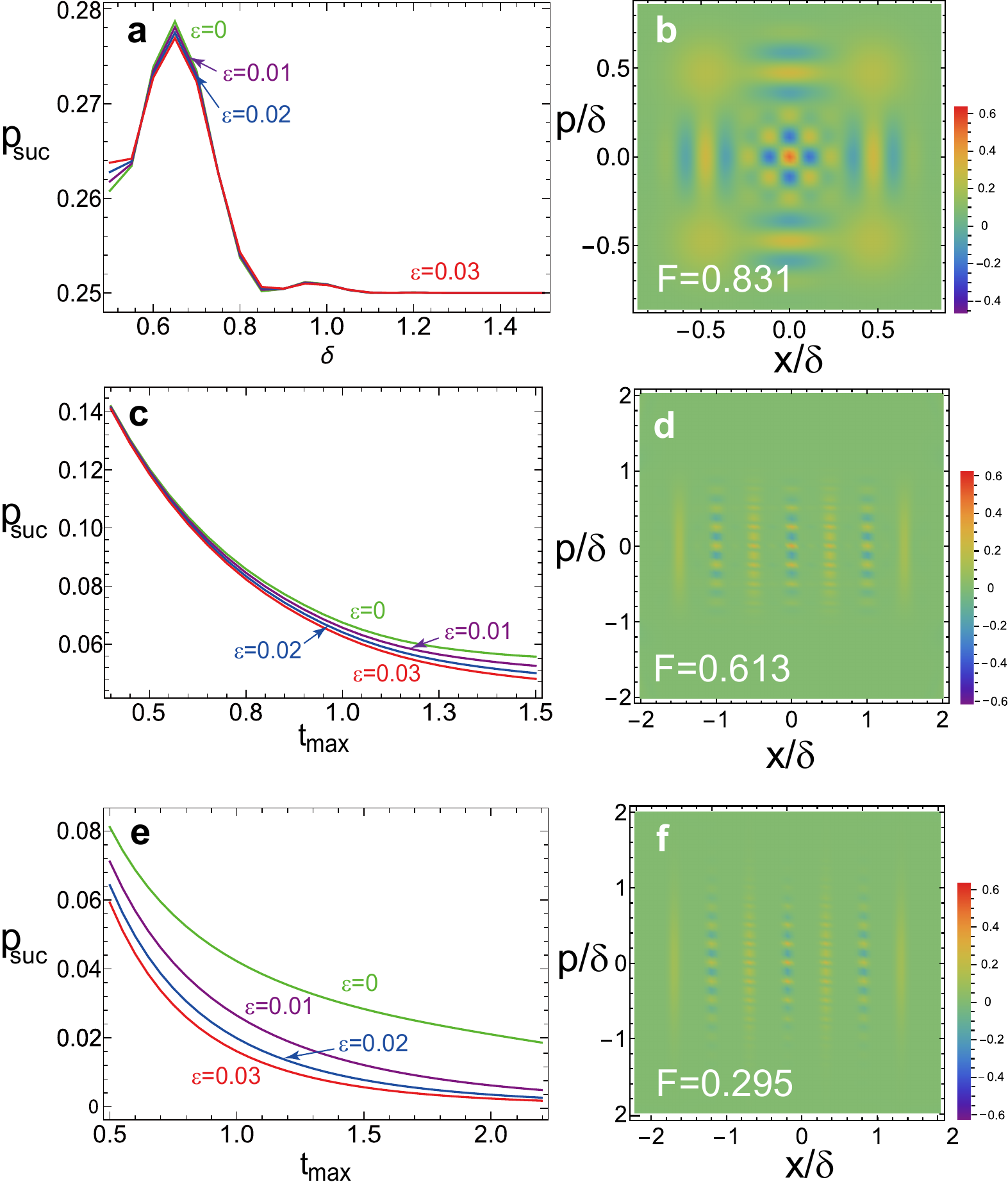}
\caption{Photon loss sensitivity for cat and GKP state preparations of Fig.~\ref{fig6}.  
Panels (a)(c)(e) show the total post-selection probability for the  cat state in Fig.~\ref{fig6}(a), the $M=3$ GKP state in Fig.~\ref{fig6}(b), and the $M=11$ GKP state in Fig.~\ref{fig6}(c), respectively. 
Panels (b)(d)(f) show the associated Wigner functions at $ \epsilon = 0.02 $, which is the lossy counterpart of Fig.~\ref{fig6}(a)(b)(c), respectively.
}

\label{fig7}   
\end{figure}

\section{Universal state generation}

In this section, we show that the combination of the measurements (\ref{measurementops}) and displacement operations has the ability to generate an arbitrary bosonic state. The aim here is to show that it is possible, in principle, to generate an arbitrary state, rather than a practical scheme for generating a given state. As before, postselection of measurement operators (\ref{measurementops}) will be allowed for any non-zero success probability.

\subsection{Universal gate sets}

Our main strategy will be to adapt the universality proof for continuous variable modes \cite{lloyd1999quantum} to our case.  The crux of the argument in Ref. \cite{lloyd1999quantum} is that given that if one has available the Hamiltonians $ H_1, H_2 $, then one may also implement the Hamiltonian $  H_3 = i [H_1, H_2] $ through the relation
\begin{align}
e^{i H_1 \epsilon} e^{i H_2 \epsilon} e^{-i H_1 \epsilon} e^{- i H_2 \epsilon} = e^{[H_1, H_2] \epsilon^2} + O(\epsilon^3) ,
\label{bch}
\end{align}
which follows from the Baker-Campbell-Hausdorff formula.  
Reusing the relation (\ref{bch}) with $ H_1 $ and $H_3 $, one may generate further Hamiltonians $ H_4 = i [H_1, H_3] $, for example. This allows one to build up a gate set, which may then be used to generate unitary evolutions of the form
\begin{align}
e^{i \sum_m w_m H_m }  \approx  \left( \prod_m e^{i w_m H_m/M } \right)^M
\label{genunitary}
\end{align}
where we used the Lie-Trotter formula.  Universality follows if the set of Hamiltonians $ \{ H_m \} $ generated by successive commutations span the full operator space.  In the case of bosonic modes, displacements, phase rotations, and a nonlinear Hamiltonian such as the Kerr Hamiltonian $ (a^\dagger a)^2 $ is sufficient for universality \cite{lloyd1999quantum}. 

To show universality in our case, we will make use of the imaginary time version of (\ref{bch}), which reads
\begin{align}
e^{- H_1 \epsilon} e^{- H_2 \epsilon} e^{+H_1 \epsilon} e^{+H_2 \epsilon} = e^{[H_1, H_2] \epsilon^2} + O(\epsilon^3) .
\label{bchimag}
\end{align}
We also use the mixed case where $H_1 $ is evolved under imaginary time evolution, while $ H_2 $ is unitarily evolved:
\begin{align}
e^{- H_1 \epsilon} e^{- i H_2 \epsilon} e^{+H_1 \epsilon} e^{+i H_2 \epsilon} = e^{i [H_1, H_2] \epsilon^2} + O(\epsilon^3) .
\label{bchimagunit}
\end{align}
This reveals an interesting and important fact.  Due to the fact that $  H_3 = i [H_1, H_2] $ is Hermitian $ H_3^\dagger = H_3 $, then two imaginary time gates as given in (\ref{bchimag}) can be combined to give a {\it unitary} gate, i.e. real time evolution.  Combining an imaginary time gate with a unitary gate as given in (\ref{bchimagunit}) produces an {\it imaginary time} gate.  

Our aim will be to construct the imaginary time version of (\ref{genunitary}), which reads
\begin{align}
e^{-\sum_m w_m H_m }  \approx  \left( \prod_m e^{- w_m H_m/M } \right)^M  .
\label{genimagevol}
\end{align}
Here, the gates on the right hand side are imaginary time gates, hence they are constructed using (\ref{bchimagunit}).  As with the unitary case, as long as the set of  Hamiltonians ${\cal H_\text{imag} } = \{ H_m \} $ that can be generated span the full operator space, one can construct an arbitrary state since 
\[
 \lim_{t \rightarrow \infty}  \frac{e^{- H_\psi t}\ket{\psi_0}}
{\norm{e^{- H_\psi t}\ket{\psi_0}}}
= 
\ket{\psi}
\]
%
% \begin{align}
%    e^{-H_{\psi} t  }  | \psi_0 \rangle \propto | \psi \rangle,
% \end{align}
%
where $ |\psi \rangle $ is the non-degenerate ground state of the Hamiltonian 
\begin{align}
    H_\psi = \sum_m w_m H_m .
\end{align} 
Here, the initial state $ | \psi_0 \rangle  $ must have a non-zero overlap with the target state $ | \psi \rangle $.   Specifically, we may choose
\[
H_\psi
=
I-\ket{\psi}\bra{\psi}.
\]
This Hamiltonian has ground state $\ket{\psi}$ with eigenvalue $0$, while every
state orthogonal to $\ket{\psi}$ has eigenvalue $1$.

\subsection{Measurement-based imaginary time gates}
\label{sec:measimag}

We now discuss how to make the individual imaginary time gates on the right hand side of (\ref{genimagevol}). To this end, we use the measurement operators (\ref{measurementops}) to show that they approximate several elementary imaginary time gates.  A similar approach was discussed in Ref. \cite{mao2023measurement}. 

First, taking 
$  \tau = \pm \epsilon, 0 < \epsilon \ll 1 , \phi =  \pi/4 $ for the $ {\cal M}_- $ outcome in (\ref{measurementops}), we have
\begin{align}
{\cal M}_- (\pm \epsilon ,\pi/4) & =  \frac{1}{\sqrt{2}} ( \cos( \hat{n} \epsilon) \pm  \sin ( \hat{n} \epsilon) ) \\
& \approx  \frac{1}{\sqrt{2}} (  I  \pm  \hat{n} \epsilon ) \approx   \frac{1}{\sqrt{2}} e^{\pm \hat{n} \epsilon }, 
\label{n1gate}
\end{align}
where $ \hat{n} = a^\dagger a$. This gives an imaginary time gate with Hamiltonian $ \propto  \hat{n} $.  In (\ref{n1gate}), we used a linear expansion of the trigonometric functions to obtain an approximation to the imaginary time gate.  The exponential function may be further improved using more measurements using the methods of Appendix \ref{app:interp} at the cost of a lower postselection probability.    

A nonlinear imaginary time Hamiltonians with a negative coefficient may be produced using $  \tau = \epsilon, 0 < \epsilon < \ll 1, \phi =  \pi/2 $ for the $ {\cal M}_- $ outcome in (\ref{measurementops}), 
\begin{align}
{\cal M}_-  (\epsilon ,\pi/2) & =  \cos (\hat{n} \epsilon)  \\
& \approx I - \frac{ \hat{n}^2}{2} \epsilon^2 \approx  e^{  - \hat{n}^2 \epsilon^2/2 } .
\label{n2gate}
\end{align}
Again, the approximation to the imaginary time gate may be further improved using multiple measurements as shown in Appendix \ref{app:gaussianneg}.  It is also possible to obtain a positive exponent using multiple measurements as described in Appendix \ref{app:interp} and \ref{app:gaussianpos}.

Further imaginary time gates may be produced by combining the imaginary time gates with unitary displacement operations.  Taking $ H_1 = \hat{n} $  and $ H_2 = \alpha a+ \alpha^* a^\dagger $ as the displacement Hamiltonian in (\ref{bchimagunit}), we have
\begin{align}
i[H_1, H_2] = i[\hat{n}, \alpha a+ \alpha^* a^\dagger  ] = -i \alpha a+ i \alpha^* a^\dagger .
\end{align}
This produces the imaginary time gate $ e^{-i \alpha a+ i \alpha^* a^\dagger} $.  

Using $ H_1 = \hat{n}^2 $ and $ H_2 = \alpha a+ \alpha^* a^\dagger $ in (\ref{bchimagunit}) yields
\begin{align}
i[H_1, H_2] & =i [\hat{n}^2, \alpha a+ \alpha^* a^\dagger  ] \\
& = 2\hat{n} (-i \alpha a+ i  \alpha^* a^\dagger ) + i (\alpha a+ \alpha^* a^\dagger) .
\label{thirdorderham}
\end{align}
This produces an operator that is cubic in bosonic operators. Here, since $ \alpha $ is an arbitrary complex number, the sign is freely choosable.  This gives another route to obtain a nonlinear imaginary time operator using only (\ref{n2gate}) and displacement operations, but with a freely choosable sign.  

Commuting the result of (\ref{thirdorderham}) with the displacement Hamiltonian again produces
\begin{align}
i [i[H_1, H_2], H_2] = -2 (  \alpha^2 a^2 + (\alpha^* a^\dagger)^2) +  2 |\alpha|^2 ( 4 \hat{n} + 1 )  .
\end{align}
Subtracting off the terms proportional to $ \hat{n} $, this produces an imaginary time gate counterpart of the squeezing Hamiltonian.  

So far, the above procedure has produced imaginary time gates corresponding to the Hamiltonians
\begin{align}
{\cal H_\text{imag} } = & \{ -i \alpha a+ i \alpha^* a^\dagger ,  \alpha^2 a^2 + (\alpha^* a^\dagger)^2,  \hat{n}, \nonumber \\
& \hat{n} (- i\alpha a+  i \alpha^* a^\dagger ), \hat{n}^2 \}
\label{imaghams}
\end{align}
Commuting these imaginary time Hamiltonians with the displacement Hamiltonian $ H_2 = \alpha a+ \alpha^* a^\dagger $ always reduces the order of the Hamiltonian in terms of bosonic operators.  Thus to proceed further we require a real time Hamiltonian of higher order.  This can be produced by combining two imaginary time gates and using (\ref{bchimag}).  For example, using $ H_1 = \hat{n}^2 $ and $ H_2 =  -i \alpha a+ i \alpha^* a^\dagger $, we obtain
\begin{align}
i[H_1, H_2] & = i [ \hat{n}^2, -i \alpha a+ i \alpha^* a^\dagger] \\
& = - 2 \hat{n} (\alpha a + \alpha^* a^\dagger) + \alpha a - \alpha^* a^\dagger  ,
\end{align}
which produces a {\it unitary} evolution.  This gives an additional real time gate, which is nonlinear.  Thus the obtained gate set up to this point is
\begin{align}
H_{\text{real}} = & \{ \alpha a+  \alpha^* a^\dagger , \hat{n} (\alpha a+ \alpha^* a^\dagger ) \} .  
\label{realhams}
\end{align}
Further combining the imaginary time Hamiltonians (\ref{imaghams}) with the unitary Hamiltonians (\ref{realhams}), we may produce bosonic imaginary time Hamiltonians of any order. Similarly to the arguments in Ref. \cite{lloyd1999quantum}, the nonlinear (real time) Hamiltonian in (\ref{realhams}) increases the order of the imaginary time Hamiltonians (\ref{imaghams}) using (\ref{bchimagunit}).  This shows that it is possible to produce an arbitrary imaginary time evolution, hence an arbitrary state using the combination of measurements (\ref{measurementops}) and unitary displacement operations.

\section{Summary and conclusions}

We have explored the use of  displacement operations and postselected dispersive measurements, in particular parity measurements,  for the preparation of squeezed vacuum states, multi-component cat states, and GKP states.  
The fundamental principle that allows for the generation of squeezed states was shown in (\ref{measseq}), namely the fact that an infinitely squeezed state is an eigenstate of the displaced parity operator.  
%This allows for a rapid convergence of the state towards the squeezed state.  
In the ideal case, high squeezing levels of up to 21.5 dB can be obtained with a modest number of measurements ($N=11$) and reasonably high success probabilities ($p_{\text{suc}} = 8.5\% $). 
Even for a smaller number of measurements ($N=3$), 8.9 dB of squeezing can be obtained with a high success probability ($p_{\text{suc}} = 32 \% $).   
These results are based on a parity measurement sequence we have found, which results to good approximation in squeezed states, see Fig. \ref{fig1}(a).  
Under loss, squeezing levels are reduced, but still remain considerably large under realistic experimental conditions.  

Besides squeezed states, we have shown that our approach can be generalized to the preparation of other types of states, such as multi-component cat states.  
The key insight is that parity measurement can be considered as an operation that creates a superposition of the original and its point-reflected state.  
Using this basic idea, lattices of coherent states can be generated, which form the basis for the preparation of GKP states.
Under loss, these states show a higher sensitivity than the squeezed states.  We attribute this mainly due to the fundamentally more fragile nature of these states, and the more ``dynamical'' way that these states are generated.  For squeezed states, the projection sequence is more convergent process, where each measurement stabilizes a common eigenstate. 

More generally, we showed that it is possible to generate an arbitrary state using dispersive measurements (\ref{measurementops}) and displacement operations. The key feature that is used to show this is the nonlinearity of measurements, which produces an effective imaginary time Kerr Hamiltonian evolution.  Under postselection, the measurement operators mimic imaginary time gates \cite{mao2023measurement}.  Using a sequence of measurements it is possible to engineer number basis filters of an arbitrary form.  This suggests that our scheme can be used in a similar way to quantum control methods such as Gradient Ascent Pulse Engineering (GRAPE) \cite{khaneja2005optimal} and Selective Number dependent Arbitrary Phase (SNAP) \cite{heeres2015snap} where states are engineered using an optimized gate sequence.  In particular, there is a strong similarity with SNAP which works with a combination of phase rotations and displacements. One major difference to these schemes is that these are deterministic, as all gates are unitary.   In our case, instead of phase rotations, parity measurements are utilzed  as the non-linear resource.  
% We have compared our method to SNAP in the context of producing a squeezed state and have found favorable results (see Supplementary Information).  However, due to the different operations that are used, a direct comparison is difficult without quantify the resources required to make the operations. 
An interesting avenue of future work would be use the techniques developed in SNAP and apply them to our scheme.  

%In our case, as postselection is used, one would require maximizing not only the final state fidelity but also the success probability.  Further investigation of such methods is left as future work. 

The main drawback of our technique is that postselection is required, which results in potentially a low success probability.   
A way to circumvent this is to use imaginary time evolution methods \cite{mao2023measurement}, which can prepare any eigenstate of a given Hamiltonian and has been shown to produce a variety of states \cite{kondappan2023imaginary,chen2024efficient}.   
The idea here is to simulate a measurement in the energy eigenbasis of the Hamiltonian and then perform adaptive unitary operations such as to ensure that convergence in the desired ground state is obtained. 
While more complex than postselection, it has the advantage that higher efficiencies can be obtained.

% {\color{red} Existing protocols for measurement-based preparation of squeezed states rely on quadrature measurements, making them less compatible for realization} on cavity/circuit-QED systems, trapped ions and atoms.
% In these platforms, in fact, the bosonic mode is read-out through a two-level degree of freedom, making more natural to perform displaced parity measurements.
% Therefore, our work opens up new possibilities for state preparation in microwave and optical cavities, as well as motional states of trapped ions, atoms and solid-state oscillators.

\section*{Declarations}

\subsection{Author Contributions}

Z.L. and S.L. contributed equally to this work. Z.L. and S.L. performed the calculations, T. B. and M.F. formulated the theoretical concept. 
J.F. assisted with numerical analysis and validation. K.Z., T. M., and T.B. performed analytical calculations. 
M.F.,V.I., and T.B. supervised the project. 
Z.L., S.L., T.B., M. F.  wrote the paper. All authors discussed the results and reviewed the manuscript.

\subsection{Competing Interests}

The authors declare no competing interests.

\begin{acknowledgments}
This work is supported by the SMEC Scientific Research Innovation Project (2023ZKZD55); the National Natural Science Foundation of China (92576102); the Science and Technology Commission of Shanghai Municipality (22ZR1444600); the NYU Shanghai Boost Fund; the China Foreign Experts Program (G2021013002L); the NYU-ECNU Institute of Physics at NYU Shanghai; the NYU Shanghai Major-Grants Seed Fund; and Tamkeen under the NYU Abu Dhabi Research Institute grant CG008. 
M.F. was supported by the Swiss National Science Foundation Ambizione Grant No. 208886, and by The Branco Weiss Fellowship -- Society in Science, administered by the ETH Z\"{u}rich.
\end{acknowledgments}

\appendix

\section{Hamiltonian for infinitely squeezed state}
\label{sec:hamsqueezed}

An alternative formulation of the infinitely squeezed state is as the ground state of the Hamiltonian 
\begin{align}
    H = -\sum_{n=1}^N  P_+ ( \alpha = it_n) .
    \label{ham}
\end{align}
It is guaranteed that $ |x_0 \rangle $ is the ground state given that the eigenspectrum of (\ref{ham}) is within the range $ [-N,N] $ since each term within the sum takes a value $ [-1,1] $ for any expectation value.  The eigenvalue of $ |x_0 \rangle $  is $ -N $ which shows that it is the minimum value within the allowed range, showing it must be a ground state.

\section{Interpolation with Real Rooted Polynomials}
\label{app:interp}

In this section, we show that the imaginary time operators discussed in Sec. \ref{sec:measimag} can be improved further beyond the low-order expansions used in (\ref{n1gate}) and (\ref{n2gate}).  

Our approach will be to use a sequence of measurement operators (\ref{measurementops}), which take a general form
\begin{align}
{\cal M} (\vec{\tau}, \vec{\phi}) & =  \prod_{m=1}^M {\cal M}_- ( \tau_m, \phi_m) \label{seqmeas} \\
& = \sum_n \left( \prod_{m=1}^M \sin (n \tau_m + \phi_m) \right) | n \rangle \langle n| 
\end{align}
where we defined
\begin{align}
\vec{\tau} & =  (\tau_1, \tau_2, \dots, \tau_M) \\
\vec{\phi} & =  (\phi_1, \phi_2, \dots, \phi_M)  .
\end{align}
Eq. (\ref{seqmeas}) corresponds to a sequence of measurements postselected all on the $ {\cal M}_- $ outcome, each with an adjustable interaction time $ \tau_m $ and phase offset $ \phi_m $ on the $m$th measurement.  There are a total of $ M $ measurements.  

We work within a truncated Fock space such that $ n \in \{0, \dots, N\} $, and assume that $ |\tau_m| N , |\phi_m| \ll 1 $, such that  we may approximate
\begin{equation}
{\cal M} (\vec{\tau}, \vec{\phi}) 
  \approx \sum_{n=0}^{N} \left( \sum_{m} \tau_{m} \right)
     \prod_{m=1}^{M} (n + x_{m}) \lvert n \rangle \langle n |  ,
     \label{approxmeasureseq}
\end{equation}
where $ x_{m} = \phi_{m}/\tau_{m} $.  Our aim will be to try to construct a real-valued 
diagonal  filter $\mathcal{F} = \sum_{n=0}^{N} f_{n} \lvert n \rangle \langle n \rvert$ using (\ref{approxmeasureseq}).  The product in (\ref{approxmeasureseq}) takes the form of a polynomial in $ n $ with roots at $ -x_m $.  It is however restricted as all the roots are real, not complex, from the Fudamental Theorem of Algebra. Thus it is not possible to write all possible polynomials in $ n $ using the form (\ref{approxmeasureseq}).  However, since $n $ is discrete variable, there is still a possibility that there exists a polynomial to fit an arbitrary set of values $ f_n $.  

That is, we can use (\ref{approxmeasureseq}) to construct a general filter $\hat{\mathcal{F}} $ if there exists a real rooted polynomial function $p(x) = c \prod_{m=1}^{M} (x +  x_{m})$ such that $p(n) = f_{n}$ for all $n \in \{0, \dots, N\}$. This is proven below.  

\begin{theorem}
Let
\[
(x_1,y_1),\dots,(x_n,y_n)\in\mathbb R^2,
\qquad x_1<x_2<\cdots<x_n.
\]
Then there exists a real polynomial \(p\) of degree \(n\), all of whose roots are real, such that
\[
p(x_j)=y_j,\qquad j=1,\dots,n.
\]
Equivalently,
\[
p(x)=c\prod_{j=1}^n (x-r_j)
\]
for some \(c\in\mathbb R\setminus\{0\}\) and real numbers \(r_1,\dots,r_n\).
\end{theorem}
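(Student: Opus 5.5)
The plan is to write the interpolant as a large multiple of the node polynomial plus a correction of lower degree, and then show by a sign-change argument that the real roots survive this perturbation.

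\textbf{Setup.} Let $q$ be the Lagrange interpolating polynomial of degree at most $n-1$ with $q(x_j)=y_j$; it exists because the nodes are distinct. Let $w(x)=\prod_{j=1}^n (x-x_j)$. For every real $c\neq 0$, the polynomial
\[
p_c(x)=c\,w(x)+q(x)
\]
satisfies $p_c(x_j)=y_j$. It has degree exactly $n$ with leading coefficient $c$, since $\deg q\le n-1$. It therefore suffices to find a $c$ for which $p_c$ has $n$ distinct real roots. Because $\deg p_c=n$, it then has no other roots and factors as $c\prod_j(x-r_j)$ with every $r_j$ real.

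\textbf{Why $w$ helps.} The polynomial $w$ has $n$ simple real roots, namely the nodes $x_j$, and it changes sign at each one. Dividing by $c$, we have $p_c/c=w+q/c$, which is a uniformly small perturbation of $w$ on compact sets when $|c|$ is large. Simple real roots should survive such a perturbation.

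\textbf{Quantitative sign-change step.} I will make the previous point precise as follows.
\begin{enumerate}
\item Choose $\varepsilon>0$ so small that the intervals $I_j=[x_j-\varepsilon,x_j+\varepsilon]$ are pairwise disjoint and $\varepsilon<1$.
\item Then $w(x_j-\varepsilon)$ and $w(x_j+\varepsilon)$ are nonzero and of opposite sign, because $x_j$ is the only root of $w$ in $I_j$ and it is simple.
\item Set $\delta=\min_j \min\{|w(x_j-\varepsilon)|,|w(x_j+\varepsilon)|\}>0$ and $K=\max_{[x_1-1,\,x_n+1]}|q|$.
\item Take any $|c|>K/\delta$. At each of the $2n$ interval endpoints, $|c\,w|>|c|\delta>K\ge|q|$, so $p_c$ has the same sign as $c\,w$ there.
\item Hence $p_c$ changes sign across each $I_j$. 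By the intermediate value theorem it has a root in each of the $n$ disjoint intervals, giving $n$ distinct real roots.
\end{enumerate}

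\textbf{Degenerate case and main obstacle.} If all $y_j=0$, then $q\equiv 0$ and $p_c=c\,w$ already works, consistent with the argument above. The only real subtlety is securing degree exactly $n$ and ruling out complex roots. Both are handled by adding $c\,w$ rather than perturbing $q$ directly: the leading coefficient is $c$, and a degree-$n$ polynomial with $n$ distinct real roots has no other roots.
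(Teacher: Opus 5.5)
Your proposal is correct and follows essentially the same route as the paper: add a large multiple of the node polynomial $\prod_j(x-x_j)$ to the Lagrange interpolant, compare signs at the endpoints of small disjoint intervals around the nodes, apply the intermediate value theorem to get $n$ distinct real roots, and conclude from the degree being exactly $n$ that all roots are real. The differences are cosmetic: you use a uniform $\varepsilon$ and a uniform bound $K$ on $[x_1-1,x_n+1]$ where the paper compares at each endpoint separately, and you allow either sign of $c$; also, the first inequality in your step 4 should read $|c\,w|\ge|c|\delta$, which still gives $|c\,w|>K\ge|q|$.
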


\begin{proof}
By the Lagrange interpolation theorem, there exists a unique polynomial
\(\mathcal L\) of degree at most \(n-1\) such that
\[
\mathcal L(x_j)=y_j,\qquad j=1,\dots,n.
\]
Define
\[
F(x)=\prod_{j=1}^n (x-x_j).
\]
Then \(F(x_j)=0\) for every \(j\). Hence for any \(\lambda\in\mathbb R\), the polynomial
\[
p_\lambda(x)=\mathcal L(x)+\lambda F(x)
\]
satisfies
\[
p_\lambda(x_j)=\mathcal L(x_j)+\lambda F(x_j)=y_j.
\]

It remains to choose \(\lambda\) so that \(p_\lambda\) has \(n\) real roots. Since the roots \(x_1,\dots,x_n\) of \(F\) are simple, \(F\) changes sign at each \(x_j\). Choose \(\delta_j>0\) sufficiently small so that the intervals
\[
I_j=(x_j-\delta_j,x_j+\delta_j)
\]
are pairwise disjoint and contain no other zero of \(F\). Then
\[
F(x_j-\delta_j)F(x_j+\delta_j)<0.
\]

Moreover, since none of the endpoints \(x_j\pm\delta_j\) is a zero of \(F\), we have
\[
|F(x_j\pm\delta_j)|>0.
\]
Choose \(\lambda>0\) sufficiently large such that, for every \(j\),
\[
\lambda |F(x_j\pm\delta_j)|
>
|\mathcal L(x_j\pm\delta_j)|.
\]
Then at each endpoint, the term \(\lambda F(x)\) dominates \(\mathcal L(x)\). Therefore
\[
\operatorname{sgn}(p_\lambda(x_j\pm\delta_j))
=
\operatorname{sgn}(F(x_j\pm\delta_j)).
\]
Since \(F\) has opposite signs at \(x_j-\delta_j\) and \(x_j+\delta_j\), we get
\[
p_\lambda(x_j-\delta_j)p_\lambda(x_j+\delta_j)<0.
\]
By the intermediate value theorem, \(p_\lambda\) has at least one real root in each interval \(I_j\).

Because the intervals \(I_1,\dots,I_n\) are pairwise disjoint, \(p_\lambda\) has at least \(n\) distinct real roots. Also,
\[
p_\lambda(x)=\mathcal L(x)+\lambda F(x)
\]
has degree exactly \(n\), since \(\lambda\neq 0\) and \(F\) has degree \(n\). Hence \(p_\lambda\) has exactly \(n\) roots over \(\mathbb C\), counted with multiplicity. Since it already has \(n\) distinct real roots, all its roots are real.

Therefore
\[
p_\lambda(x)=c\prod_{j=1}^n (x-r_j)
\]
with \(r_j\in\mathbb R\), and \(p_\lambda(x_j)=y_j\) for all \(j=1,\dots,n\).
\end{proof}

% For a sufficiently large number of measurements $ M $, the measurement sequence can reproduce an arbitrary discrete real valued function, up to a constant factor (see Appendix \ref{app:interp}).  This allows us to construct individual imaginary time gates.  A similar approach was used in Ref. \cite{mao2023measurement} to produce imaginary time gates. 

\section{Measurement-based $\hat{n}^2 $ imaginary time gates}

In this section we show further details regarding the implementation of the nonlinear imaginary time gate (\ref{n2gate}).

\subsection{Negative exponent}
\label{app:gaussianneg}

Here, we show an improved method of obtaining the approximation  (\ref{n2gate}).  

% First we define the displaced version of the $ {\cal M}_+ $ operator defined in (\ref{measurementops}), setting $ \phi = 0 $
% %
% \[
% {\cal M}_\alpha (\tau)
% =
% D(\alpha)\cos( \hat{n} \tau)D^\dagger(\alpha)
% =
% \cos(\hat{n}_\alpha\tau),
% \]
% where
% \[
% \hat{n}_\alpha
% =
% D(\alpha) \hat{n} D^\dagger(\alpha)
% =
% (a^\dagger-\alpha^*)(a-\alpha).
% \]
% Equivalently,
% \[
% \hat{n}_\alpha
% =
% \hat{n}-\alpha a^\dagger-\alpha^*a+|\alpha|^2.
% \]

The measurement operator $ {\cal M}_+ $ defined in (\ref{measurementops}), setting $ \phi = 0 $ for small $\tau$ is 
\[
{\cal M}_+(\tau, 0)  = \cos(\hat{n} \tau)
=
1-\frac{\tau^2}{2}\hat{n}^2+O(\tau^4).
\]
Choose
\[
\tau=\sqrt{\frac{2t}{K}}.
\]
Then
\[
{\cal M}_+ (\tau,0)^K
=
\left[
1-\frac{t}{K} \hat{n}^2
+
O\left(\frac{1}{K^2}\right)
\right]^K.
\]
Taking the limit $K\to\infty$, we obtain
\[
\lim_{K\to\infty}
{\cal M}_+ \left(\sqrt{\frac{2t}{K}},0 \right)^K
=
e^{-t \hat{n}^2}, 
\]
as claimed. 

% This is a nonunitary gate. It is not generated by a Hamiltonian evolution
% $e^{-iHt}$, but by an imaginary-time evolution.
% To use a BCH commutator construction, we need access to both
% \[
% e^{-t \hat{n}_\alpha^2}
% \qquad
% \text{and}
% \qquad
% e^{+t \hat{n}_\alpha^2}.
% \]
% The negative sign follows directly from repeated weak measurements:
% \[
% \left[
% \cos\left(\hat{n}_\alpha\sqrt{\frac{2t}{K}}\right)
% \right]^K
% \to
% e^{-t \hat{n}_\alpha^2}.
% \]

\subsection{Positive exponent}
\label{app:gaussianpos}

In Appendix \ref{app:interp} we have already shown that it is possible to use a measurement sequence (\ref{approxmeasureseq}) to approximate an arbitrary filter.  Hence it follows that one can  create an imaginary time gate $ \propto e^{\hat{n}^2 t} $.  However, as this operator is unbounded we must analyze under what conditions the operator can be implemented.

For generality we work with the displaced version of the $ e^{\hat{n}^2 t} $ operator 
\[
e^{+ \hat{n}_\alpha^2 t }
=
D(\alpha) e^{\hat{n}^2 t}  D^\dagger(\alpha) ,
\]
where
\[
\hat{n}_\alpha
=
D(\alpha) \hat{n} D^\dagger(\alpha)
=
(a^\dagger-\alpha^*)(a-\alpha).
\]
Equivalently,
\[
\hat{n}_\alpha
=
\hat{n}-\alpha a^\dagger-\alpha^*a+|\alpha|^2.
\]

Let
\[
\mathcal H_N
=
\mathrm{span}\{\ket0,\ket1,\dots,\ket N\}.
\]
Assume that on every finite cutoff $\mathcal H_N$, the protocol can implement,
up to an irrelevant scalar normalization,
\[
e^{+t \hat{n}_\alpha^2}
\]
This assumption is reasonable only on a finite cutoff, because $e^{+t n^2}$ is
unbounded on the full infinite-dimensional Hilbert space.

On this finite-dimensional space, since $\hat{n}_\alpha^2$ is Hermitian on $\mathcal H_N$, it has a spectral decomposition
\begin{align}
    \hat{n}_\alpha^2
    =
    \sum_j \lambda_j \ket{\lambda_j}\bra{\lambda_j},
\end{align}
where $\lambda_j\in\mathbb R$. Define
\begin{align}
    \lambda_{\max}(\alpha)
    =
    \max_j \lambda_j .
\end{align}
Then
\begin{align}
    \lambda_j
    \leq
    \lambda_{\max}(\alpha)
\end{align}
for every $j$. Hence
\begin{align}
    \lambda_{\max}(\alpha)I-\hat{n}_\alpha^2
    &=
    \sum_j
    \left(
    \lambda_{\max}(\alpha)-\lambda_j
    \right)
    \ket{\lambda_j}\bra{\lambda_j}.
\end{align}
Because
\begin{align}
    \lambda_{\max}(\alpha)-\lambda_j\geq 0,
\end{align}
all eigenvalues of $\lambda_{\max}(\alpha)I-\hat{n}_\alpha^2$ are nonnegative.
Therefore,
\begin{align}
    \lambda_{\max}(\alpha)I-\hat{n}_\alpha^2
    \succeq 0.
\end{align}
Equivalently, for any $\ket{\psi}\in\mathcal H_N$,
\begin{align}
    \bra{\psi}
    \left(
    \lambda_{\max}(\alpha)I-\hat{n}_\alpha^2
    \right)
    \ket{\psi}
    \geq 0.
\end{align}

Now suppose we have the filter generated by
this positive semidefinite complementary operator:
\begin{align}
    e^{-t\left(\lambda_{\max}(\alpha)I-\hat{n}_\alpha^2\right)}.
\end{align}
Then
\begin{align}
    e^{-t\left(\lambda_{\max}(\alpha)I-\hat{n}_\alpha^2\right)}
    &=
    e^{-t\lambda_{\max}(\alpha)}
    e^{+t\hat{n}_\alpha^2}.
\end{align}
The factor $e^{-t\lambda_{\max}(\alpha)}$ is a scalar normalization
factor. In a postselected imaginary-time protocol, such an overall scalar is
irrelevant because the state is normalized after the operation. Therefore,
on the finite cutoff space $\mathcal H_N$, access to
\begin{align}
    e^{-t\left(\lambda_{\max}(\alpha)I-\hat{n}_\alpha^2\right)}
\end{align}
is equivalent, up to normalization, to access to
\begin{align}
    e^{+t\hat{n}_\alpha^2}
    =
    e^{+t \hat{n}_\alpha^2|_{\mathcal H_N}}.
\end{align}

Thus, on a finite cutoff, the protocol has access to both directions
\begin{align}
    e^{-t \hat{n}_\alpha^2}
    \qquad\text{and}\qquad
    e^{+t \hat{n}_\alpha^2},
\end{align}
provided that the complementary suppressive filter
\begin{align}
    e^{-t\left(\lambda_{\max}(\alpha)I-\hat{n}_\alpha^2\right)}
\end{align}
can be implemented on $\mathcal H_N$.

\bibliography{paperrefs}

@article{schonbeck201713,
  title={13 d{B} squeezed vacuum states at 1550 nm from 12 mW external pump power at 775 nm},
  author={Sch{\"o}nbeck, Axel and Thies, Fabian and Schnabel, Roman},
  journal={Optics letters},
  volume={43},
  number={1},
  pages={110--113},
  year={2017},
  publisher={Optical Society of America},
  doi = {https://doi.org/10.1364/OL.43.000110}
}

@article{aspelmeyer2014cavity,
  title={Cavity optomechanics},
  author={Aspelmeyer, Markus and Kippenberg, Tobias J and Marquardt, Florian},
  journal={Reviews of Modern Physics},
  volume={86},
  number={4},
  pages={1391--1452},
  year={2014},
  publisher={APS},
  doi={https://doi.org/10.1103/RevModPhys.86.1391}
}

@article{wineland1992spin,
  title={Spin squeezing and reduced quantum noise in spectroscopy},
  author={Wineland, David J and Bollinger, John J and Itano, Wayne M and Moore, FL and Heinzen, Daniel J},
  journal={Physical Review A},
  volume={46},
  number={11},
  pages={R6797},
  year={1992},
  publisher={APS},
  doi={https://doi.org/10.1103/PhysRevA.46.R6797}
}

@article{slusher1985observation,
  title={Observation of squeezed states generated by four-wave mixing in an optical cavity},
  author={Slusher, R\_E and Hollberg, LW and Yurke, Bernard and Mertz, JC and Valley, JF},
  journal={Physical review letters},
  volume={55},
  number={22},
  pages={2409},
  year={1985},
  publisher={APS},
  doi={https://doi.org/10.1103/PhysRevLett.55.2409}
}

@article{eichler2011observation,
  title={Observation of two-mode squeezing in the microwave frequency domain},
  author={Eichler, Christopher and Bozyigit, Deniz and Lang, Christian and Baur, Martin and Steffen, Lars and Fink, Johannes M and Filipp, Stefan and Wallraff, Andreas},
  journal={Physical Review Letters},
  volume={107},
  number={11},
  pages={113601},
  year={2011},
  publisher={APS},
  doi ={https://doi.org/10.1103/PhysRevLett.107.113601}
}

@article{yurke1989observation,
  title={Observation of parametric amplification and deamplification in a Josephson parametric amplifier},
  author={Yurke, Bernard and Corruccini, LR and Kaminsky, PG and Rupp, LW and Smith, AD and Silver, AH and Simon, RW and Whittaker, EA},
  journal={Physical Review A},
  volume={39},
  number={5},
  pages={2519},
  year={1989},
  publisher={APS},
  doi= {https://doi.org/10.1103/PhysRevA.39.2519}
}

@inproceedings{jabir2024quantum,
  title={Quantum enhanced precision metrology for quantum networks},
  author={Jabir, MV and Dawkins, Riley and Sabines-Chesterking, J and Reddy, Dileep V and Lita, AE and Battou, A and Gerrits, Thomas},
  booktitle={Quantum 2.0},
  pages={QTh4C--5},
  year={2024},
  organization={Optica Publishing Group},
  doi={https://doi.org/10.1364/QUANTUM.2024.QTh4C.5}
}

@article{blais2021circuit,
  title={Circuit quantum electrodynamics},
  author={Blais, Alexandre and Grimsmo, Arne L and Girvin, Steven M and Wallraff, Andreas},
  journal={Reviews of Modern Physics},
  volume={93},
  number={2},
  pages={025005},
  year={2021},
  publisher={APS},
  doi={https://doi.org/10.1103/RevModPhys.93.025005}

}

@article{wolf2019motional,
  title={Motional {F}ock states for quantum-enhanced amplitude and phase measurements with trapped ions},
  author={Wolf, Fabian and Shi, Chunyan and Heip, Jan C and Gessner, Manuel and Pezz{\`e}, Luca and Smerzi, Augusto and Schulte, Marius and Hammerer, Klemens and Schmidt, Piet O},
  journal={Nature communications},
  volume={10},
  number={1},
  pages={2929},
  year={2019},
  publisher={Nature Publishing Group UK London},
  doi={https://doi.org/10.1038/s41467-019-10576-4}
}

@article{dassonneville2021dissipative,
  title={Dissipative stabilization of squeezing beyond 3 d{B} in a microwave mode},
  author={Dassonneville, R and Assouly, R and Peronnin, T and Clerk, AA and Bienfait, A and Huard, B},
  journal={PRX Quantum},
  volume={2},
  number={2},
  pages={020323},
  year={2021},
  publisher={APS},
  doi={https://doi.org/10.1103/PRXQuantum.2.020323}
}

@article{li2011engineering,
  title={Engineering squeezed states of microwave radiation with circuit quantum electrodynamics},
  author={Li, Peng-Bo and Li, Fu-Li},
  journal={Physical Review A—Atomic, Molecular, and Optical Physics},
  volume={83},
  number={3},
  pages={035807},
  year={2011},
  publisher={APS},
  doi ={https://doi.org/10.1103/PhysRevA.83.035807}
}

@article{krasnok2024superconducting,
  title={Superconducting microwave cavities and qubits for quantum information systems},
  author={Krasnok, Alex and Dhakal, Pashupati and Fedorov, Arkady and Frigola, Pedro and Kelly, Michael and Kutsaev, Sergey},
  journal={Applied Physics Reviews},
  volume={11},
  number={1},
  year={2024},
  publisher={AIP Publishing},
  doi={https://doi.org/10.1063/5.0155213}
}

@article{jia2022determination,
  title={Determination of multimode motional quantum states in a trapped ion system},
  author={Jia, Zhubing and Wang, Ye and Zhang, Bichen and Whitlow, Jacob and Fang, Chao and Kim, Jungsang and Brown, Kenneth R},
  journal={Physical Review Letters},
  volume={129},
  number={10},
  pages={103602},
  year={2022},
  publisher={APS},
  doi={10.1103/physrevlett.129.103602}
}

@article{sutherland2021motional,
  title={Motional squeezing for trapped ion transport and separation},
  author={Sutherland, Robert T and Burd, SC and Slichter, DH and Libby, SB and Leibfried, Dietrich},
  journal={Physical review letters},
  volume={127},
  number={8},
  pages={083201},
  year={2021},
  publisher={APS},
  doi={https://doi.org/10.1103/PhysRevLett.127.083201}
}

@article{marti2024quantum,
  title={Quantum squeezing in a nonlinear mechanical oscillator},
  author={Marti, Stefano and von L{\"u}pke, Uwe and Joshi, Om and Yang, Yu and Bild, Marius and Omahen, Andraz and Chu, Yiwen and Fadel, Matteo},
  journal={Nature Physics},
  volume={20},
  number={9},
  pages={1448--1453},
  year={2024},
  publisher={Nature Publishing Group UK London},
  doi={https://doi.org/10.1038/s41567-024-02545-6}
}

@article{breitenbach1997measurement,
  title={Measurement of the quantum states of squeezed light},
  author={Breitenbach, Gerd and Schiller, S and Mlynek, J},
  journal={Nature},
  volume={387},
  number={6632},
  pages={471--475},
  year={1997},
  publisher={Nature Publishing Group UK London},
  doi    = {https://doi.org/10.1038/387471a0}
}

@article{banaee2008squeezed,
  title={Squeezed state generation in photonic crystal microcavities},
  author={Banaee, MG and Young, Jeff F},
  journal={Optics Express},
  volume={16},
  number={25},
  pages={20908--20919},
  year={2008},
  publisher={Optica Publishing Group},
  doi ={https://doi.org/10.1364/OE.16.020908}
}

@article{nadgaran2023squeezed,
  title={Squeezed states generation by nonlinear plasmonic waveguides: a novel analysis including loss, phase mismatch and source depletion},
  author={Nadgaran, Hamid and Izadi, Mohammad Amin and Nouroozi, Rahman},
  journal={Scientific Reports},
  volume={13},
  number={1},
  pages={1075},
  year={2023},
  publisher={Nature Publishing Group UK London},
  doi ={https://doi.org/10.1038/s41598-023-27949-x}
}

@article{colangelo2017simultaneous,
  title={Simultaneous tracking of spin angle and amplitude beyond classical limits},
  author={Colangelo, Giorgio and Ciurana, Ferran Martin and Bianchet, Lorena C and Sewell, Robert J and Mitchell, Morgan W},
  journal={Nature},
  volume={543},
  number={7646},
  pages={525--528},
  year={2017},
  publisher={Nature Publishing Group UK London},
  doi ={https://doi.org/10.1038/nature21434}
}

@article{hosten2016measurement,
  title={Measurement noise 100 times lower than the quantum-projection limit using entangled atoms},
  author={Hosten, Onur and Engelsen, Nils J and Krishnakumar, Rajiv and Kasevich, Mark A},
  journal={Nature},
  volume={529},
  number={7587},
  pages={505--508},
  year={2016},
  publisher={Nature Publishing Group UK London},
  doi = {https://doi.org/10.1038/nature16176}
}

@article{Stefszky_2011,
title = {An investigation of doubly-resonant optical parametric oscillators and nonlinear crystals for squeezing},
journal = {Journal of Physics B: Atomic, Molecular and Optical Physics},
author = {Michael Stefszky and Conor M Mow-Lowry and Kirk McKenzie and Sheon Chua and Ben C Buchler and Thomas Symul and David E McClelland and Ping Koy Lam},
year = {2010},
month = {dec},
publisher = {},
volume = {44},
number = {1},
pages = {015502},
doi ={10.1088/0264-9381/29/14/145015}
}

@article{chen2011conditional,
  title={Conditional spin squeezing of a large ensemble via the vacuum Rabi splitting},
  author={Chen, Zilong and Bohnet, Justin G and Sankar, Shannon R and Dai, Jiayan and Thompson, James K},
  journal={Physical review letters},
  volume={106},
  number={13},
  pages={133601},
  year={2011},
  publisher={APS},
  doi={https://doi.org/10.1103/PhysRevLett.106.133601}
}

@article{schleier2010states,
  title={States of an ensemble of two-level atoms with reduced quantum uncertainty},
  author={Schleier-Smith, Monika H and Leroux, Ian D and Vuleti{\'c}, Vladan},
  journal={Physical review letters},
  volume={104},
  number={7},
  pages={073604},
  year={2010},
  publisher={APS},
  doi ={https://doi.org/10.1103/PhysRevLett.104.073604}
}

@article{leroux2010implementation,
  title={Implementation of cavity squeezing of a collective atomic spin},
  author={Leroux, Ian D and Schleier-Smith, Monika H and Vuleti{\'c}, Vladan},
  journal={Physical Review Letters},
  volume={104},
  number={7},
  pages={073602},
  year={2010},
  publisher={APS},
  doi ={https://doi.org/10.1103/PhysRevLett.104.073602}
}

@article{xin2023long,
  title={Long-Lived Squeezed Ground States in a Quantum Spin Ensemble},
  author={Xin, Lin and Barrios, Maryrose and Cohen, Julia T and Chapman, Michael S},
  journal={Physical Review Letters},
  volume={131},
  number={13},
  pages={133402},
  year={2023},
  publisher={APS},
  doi={https://doi.org/10.1103/PhysRevLett.131.133402}
}

@article{berrada2013integrated,
  title={Integrated {M}ach--{Z}ehnder interferometer for {B}ose--{E}instein condensates},
  author={Berrada, Tarik and Van Frank, Sandrine and B{\"u}cker, Robert and Schumm, Thorsten and Schaff, J-F and Schmiedmayer, J{\"o}rg},
  journal={Nature communications},
  volume={4},
  number={1},
  pages={2077},
  year={2013},
  publisher={Nature Publishing Group UK London},
  doi ={https://doi.org/10.1038/ncomms3077}
}

@article{strobel2014fisher,
  title={Fisher information and entanglement of non-Gaussian spin states},
  author={Strobel, Helmut and Muessel, Wolfgang and Linnemann, Daniel and Zibold, Tilman and Hume, David B and Pezz{\`e}, Luca and Smerzi, Augusto and Oberthaler, Markus K},
  journal={Science},
  volume={345},
  number={6195},
  pages={424--427},
  year={2014},
  publisher={American Association for the Advancement of Science},
 doi={10.1126/science.1250147}
}

@article{lucke2014detecting,
  title={Detecting multiparticle entanglement of {D}icke states},
  author={L{\"u}cke, Bernd and Peise, Jan and Vitagliano, Giuseppe and Arlt, Jan and Santos, Luis and T{\'o}th, G{\'e}za and Klempt, Carsten},
  journal={Physical review letters},
  volume={112},
  number={15},
  pages={155304},
  year={2014},
  publisher={APS},
 doi ={https://doi.org/10.1103/PhysRevLett.112.155304}
}

@article{von2022parity,
  title={Parity measurement in the strong dispersive regime of circuit quantum acoustodynamics},
  author={Von L{\"u}pke, Uwe and Yang, Yu and Bild, Marius and Michaud, Laurent and Fadel, Matteo and Chu, Yiwen},
  journal={Nature Physics},
  volume={18},
  number={7},
  pages={794--799},
  year={2022},
  publisher={Nature Publishing Group UK London},
  doi ={https://doi.org/10.1038/s41567-022-01591-2}
}

@PREAMBLE{
 "\providecommand{\noopsort}[1]{}" 
 # "\providecommand{\singleletter}[1]{#1}%" 
}

@book{byrnes2021quantum,
  title={Quantum atom optics: Theory and applications to quantum technology},
  author={Byrnes, Tim and Ilo-Okeke, Ebubechukwu O},
  year={2021},
  publisher={Cambridge university press},
  doi={https://doi.org/10.1017/9781108975353}
}

@article{kondappan2023imaginary,
  title={Imaginary-time evolution with quantum nondemolition measurements: Multiqubit interactions via measurement nonlinearities},
  author={Kondappan, Manikandan and Chaudhary, Manish and Ilo-Okeke, Ebubechukwu O and Ivannikov, Valentin and Byrnes, Tim},
  journal={Physical Review A},
  volume={107},
  number={4},
  pages={042616},
  year={2023},
  publisher={APS},
  doi={https://doi.org/10.1103/PhysRevA.107.042616}
}

@article{giovannetti2011advances,
  title={Advances in quantum metrology},
  author={Giovannetti, Vittorio and Lloyd, Seth and Maccone, Lorenzo},
  journal={Nature photonics},
  volume={5},
  number={4},
  pages={222--229},
  year={2011},
  publisher={Nature Publishing Group UK London},
  doi={https://doi.org/10.1038/nphoton.2011.35}
}

@article{braunstein2005quantum,
  title = {Quantum information with continuous variables},
  author = {Braunstein, Samuel L. and van Loock, Peter},
  journal = {Rev. Mod. Phys.},
  volume = {77},
  issue = {2},
  pages = {513--577},
  numpages = {0},
  year = {2005},
  month = {Jun},
  publisher = {American Physical Society},
  doi = {10.1103/RevModPhys.77.513},
  url = {https://link.aps.org/doi/10.1103/RevModPhys.77.513}
}

@article{mao2023measurement,
  title = {Measurement-Based Deterministic Imaginary Time Evolution},
  author = {Mao, Yuping and Chaudhary, Manish and Kondappan, Manikandan and Shi, Junheng and Ilo-Okeke, Ebubechukwu O. and Ivannikov, Valentin and Byrnes, Tim},
  journal = {Phys. Rev. Lett.},
  volume = {131},
  issue = {11},
  pages = {110602},
  numpages = {6},
  year = {2023},
  month = {Sep},
  publisher = {American Physical Society},
  doi = {10.1103/PhysRevLett.131.110602},
  url = {https://link.aps.org/doi/10.1103/PhysRevLett.131.110602}
}

@article{LutterbachPRL97,
  title = {Method for Direct Measurement of the Wigner Function in Cavity QED and Ion Traps},
  author = {Lutterbach, L. G. and Davidovich, L.},
  journal = {Phys. Rev. Lett.},
  volume = {78},
  issue = {13},
  pages = {2547--2550},
  numpages = {0},
  year = {1997},
  month = {Mar},
  publisher = {American Physical Society},
  doi = {10.1103/PhysRevLett.78.2547},
  url = {https://link.aps.org/doi/10.1103/PhysRevLett.78.2547}
}

@article{LeibfriedRMP03,
  title = {Quantum dynamics of single trapped ions},
  author = {Leibfried, D. and Blatt, R. and Monroe, C. and Wineland, D.},
  journal = {Rev. Mod. Phys.},
  volume = {75},
  issue = {1},
  pages = {281--324},
  numpages = {0},
  year = {2003},
  month = {Mar},
  publisher = {American Physical Society},
  doi = {10.1103/RevModPhys.75.281},
  url = {https://link.aps.org/doi/10.1103/RevModPhys.75.281}
}

@article{Vlastakis2013,
author={Vlastakis, Brian
and Kirchmair, Gerhard
and Leghtas, Zaki
and Nigg, Simon E.
and Frunzio, Luigi
and Girvin, S. M.
and Mirrahimi, Mazyar
and Devoret, M. H.
and Schoelkopf, R. J.},
title={Deterministically Encoding Quantum Information Using 100-Photon {S}chr{\"o}dinger Cat States},
journal={Science},
year={2013},
month={Nov},
day={01},
publisher={American Association for the Advancement of Science},
volume={342},
number={6158},
pages={607-610},
doi={10.1126/science.1243289},
url={https://doi.org/10.1126/science.1243289}
}

@book{gerry2023introductory,
  title={Introductory quantum optics},
  author={Gerry, Christopher C and Knight, Peter L},
  year={2023},
  publisher={Cambridge university press},
  doi ={https://doi.org/10.1017/CBO9780511791239}
}

@article{royer1977wigner,
  title={Wigner function as the expectation value of a parity operator},
  author={Royer, Antoine},
  journal={Physical Review A},
  volume={15},
  number={2},
  pages={449},
  year={1977},
  publisher={APS},
  doi={https://doi.org/10.1103/PhysRevA.15.449}
}

@article{vahlbruch2016detection,
  title={Detection of 15 d{B} squeezed states of light and their application for the absolute calibration of photoelectric quantum efficiency},
  author={Vahlbruch, Henning and Mehmet, Moritz and Danzmann, Karsten and Schnabel, Roman},
  journal={Physical review letters},
  volume={117},
  number={11},
  pages={110801},
  year={2016},
  publisher={APS},
  doi    = {https://doi.org/10.1103/PhysRevLett.117.110801}
}

@article{byrnes2011accelerated,
  title={Accelerated optimization problem search using {B}ose--{E}instein condensation},
  author={Byrnes, Tim and Yan, Kai and Yamamoto, Yoshihisa},
  journal={New Journal of Physics},
  volume={13},
  number={11},
  pages={113025},
  year={2011},
  publisher={IOP Publishing},
  doi={10.1088/1367-2630/13/11/113025}
}

@article{scully2009super,
  title={The super of superradiance},
  author={Scully, Marlan O and Svidzinsky, Anatoly A},
  journal={Science},
  volume={325},
  number={5947},
  pages={1510--1511},
  year={2009},
  publisher={American Association for the Advancement of Science},
  doi={10.1126/science.1176695}
}

@article{heeres2017implementing,
  title={Implementing a universal gate set on a logical qubit encoded in an oscillator},
  author={Heeres, Reinier W and Reinhold, Philip and Ofek, Nissim and Frunzio, Luigi and Jiang, Liang and Devoret, Michel H and Schoelkopf, Robert J},
  journal={Nature communications},
  volume={8},
  number={1},
  pages={94},
  year={2017},
  publisher={Nature Publishing Group UK London},
  doi={https://doi.org/10.1038/s41467-017-00045-1}
}

@article{gottesman2001encoding,
  title={Encoding a qubit in an oscillator},
  author={Gottesman, Daniel and Kitaev, Alexei and Preskill, John},
  journal={Physical Review A},
  volume={64},
  number={1},
  pages={012310},
  year={2001},
  publisher={APS},
  doi={https://doi.org/10.1103/PhysRevA.64.012310}
}

@article{weigand2018generating,
  title={Generating grid states from {S}chr{\"o}dinger-cat states without postselection},
  author={Weigand, Daniel J and Terhal, Barbara M},
  journal={Physical Review A},
  volume={97},
  number={2},
  pages={022341},
  year={2018},
  publisher={APS},
  doi={https://doi.org/10.1103/PhysRevA.97.022341}
}

@article{rozpkedek2021quantum,
  title={Quantum repeaters based on concatenated bosonic and discrete-variable quantum codes},
  author={Rozpedek, Filip and Noh, Kyungjoo and Xu, Qian and Guha, Saikat and Jiang, Liang},
  journal={npj Quantum Information},
  volume={7},
  number={1},
  pages={102},
  year={2021},
  publisher={Nature Publishing Group UK London},
  doi={https://doi.org/10.1038/s41534-021-00438-7}
}

@article{valahu2025quantum,
  title={Quantum-enhanced multiparameter sensing in a single mode},
  author={Valahu, Christophe H and Stafford, Matthew P and Huang, Zixin and Matsos, Vassili G and Millican, Maverick J and Chalermpusitarak, Teerawat and Menicucci, Nicolas C and Combes, Joshua and Baragiola, Ben Q and Tan, Ting Rei},
  journal={Science Advances},
  volume={11},
  number={39},
  pages={eadw9757},
  year={2025},
  publisher={American Association for the Advancement of Science},
  doi={10.1126/sciadv.adw9757}
}

@article{royer2022encoding,
  title={Encoding qubits in multimode grid states},
  author={Royer, Baptiste and Singh, Shraddha and Girvin, Steven M},
  journal={PRX Quantum},
  volume={3},
  number={1},
  pages={010335},
  year={2022},
  publisher={APS},
  doi={https://doi.org/10.1103/PRXQuantum.3.010335}
}

@article{brady2024advances,
  title={Advances in bosonic quantum error correction with {G}ottesman--{K}itaev--{P}reskill codes: Theory, engineering and applications},
  author={Brady, Anthony J and Eickbusch, Alec and Singh, Shraddha and Wu, Jing and Zhuang, Quntao},
  journal={Progress in Quantum Electronics},
  volume={93},
  pages={100496},
  year={2024},
  publisher={Elsevier},
  doi={https://doi.org/10.1016/j.pquantelec.2023.100496}
}

@article{fluhmann2019encoding,
  title={Encoding a qubit in a trapped-ion mechanical oscillator},
  author={Fl{\"u}hmann, Christa and Nguyen, Thanh Long and Marinelli, Matteo and Negnevitsky, Vlad and Mehta, Karan and Home, JP},
  journal={Nature},
  volume={566},
  number={7745},
  pages={513--517},
  year={2019},
  publisher={Nature Publishing Group UK London},
  doi={https://doi.org/10.1038/s41586-019-0960-6}
}

@article{campagne2020quantum,
  title={Quantum error correction of a qubit encoded in grid states of an oscillator},
  author={Campagne-Ibarcq, Philippe and Eickbusch, Alec and Touzard, Steven and Zalys-Geller, Evan and Frattini, Nicholas E and Sivak, Volodymyr V and Reinhold, Philip and Puri, Shruti and Shankar, Shyam and Schoelkopf, Robert J and others},
  journal={Nature},
  volume={584},
  number={7821},
  pages={368--372},
  year={2020},
  publisher={Nature Publishing Group UK London},
  doi={https://doi.org/10.1038/s41586-020-2603-3}
}

@article{konno2024logical,
  title={Logical states for fault-tolerant quantum computation with propagating light},
  author={Konno, Shunya and Asavanant, Warit and Hanamura, Fumiya and Nagayoshi, Hironari and Fukui, Kosuke and Sakaguchi, Atsushi and Ide, Ryuhoh and China, Fumihiro and Yabuno, Masahiro and Miki, Shigehito and others},
  journal={Science},
  volume={383},
  number={6680},
  pages={289--293},
  year={2024},
  publisher={American Association for the Advancement of Science},
  doi ={10.1126/science.adk7560}
}

@article{larsen2025integrated,
  title={Integrated photonic source of {G}ottesman--{K}itaev--{P}reskill qubits},
  author={Larsen, MV and Bourassa, JE and Kocsis, S and Tasker, JF and Chadwick, RS and Gonz{\'a}lez-Arciniegas, C and Hastrup, J and Lopetegui-Gonz{\'a}lez, CE and Miatto, FM and Motamedi, A and others},
  journal={Nature},
  pages={1--5},
  year={2025},
  publisher={Nature Publishing Group UK London},
  doi ={https://doi.org/10.1038/s41586-025-09044-5}
}

@article{feng2025quantum,
  title={Quantum teleportation of cat states with binary-outcome measurements},
  author={Feng, Jingyan and Zhang, Mohan and Fadel, Matteo and Byrnes, Tim},
  journal={Quantum Science and Technology},
  volume={10},
  number={4},
  pages={045022},
  year={2025},
  publisher={IOP Publishing},
  doi={10.1088/2058-9565/adf572}
}

@article{lee2025photonic,
  title={Photonic Hybrid Quantum Computing},
  author={Lee, Jaehak and Omkar, Srikrishna and Teo, Yong Siah and Lee, Seok-Hyung and Kwon, Hyukjoon and Kim, MS and Jeong, Hyunseok},
  journal={arXiv preprint arXiv:2510.00534},
  year={2025},
  doi={https://doi.org/10.48550/arXiv.2510.00534}

}

@article{chen2024efficient,
  title={Efficient preparation of the {AKLT} state with measurement-based imaginary time evolution},
  author={Chen, Tianqi and Byrnes, Tim},
  journal={Quantum},
  volume={8},
  pages={1557},
  year={2024},
  publisher={Verein zur F{\"o}rderung des Open Access Publizierens in den Quantenwissenschaften},
  doi={https://doi.org/10.22331/q-2024-12-10-1557}
}

@article{pezze2018quantum,
  title={Quantum metrology with nonclassical states of atomic ensembles},
  author={Pezze, Luca and Smerzi, Augusto and Oberthaler, Markus K and Schmied, Roman and Treutlein, Philipp},
  journal={Reviews of Modern Physics},
  volume={90},
  number={3},
  pages={035005},
  year={2018},
  publisher={APS},
  doi ={https://doi.org/10.1103/RevModPhys.90.035005}
}

@article{HamiltonPRL17,
  title = {Gaussian Boson Sampling},
  author = {Hamilton, Craig S. and Kruse, Regina and Sansoni, Linda and Barkhofen, Sonja and Silberhorn, Christine and Jex, Igor},
  journal = {Phys. Rev. Lett.},
  volume = {119},
  issue = {17},
  pages = {170501},
  numpages = {5},
  year = {2017},
  month = {Oct},
  publisher = {American Physical Society},
  doi = {10.1103/PhysRevLett.119.170501},
  url = {https://link.aps.org/doi/10.1103/PhysRevLett.119.170501}
}

@article{LundPRL14,
  title = {Boson Sampling from a Gaussian State},
  author = {Lund, A. P. and Laing, A. and Rahimi-Keshari, S. and Rudolph, T. and O'Brien, J. L. and Ralph, T. C.},
  journal = {Phys. Rev. Lett.},
  volume = {113},
  issue = {10},
  pages = {100502},
  numpages = {5},
  year = {2014},
  month = {Sep},
  publisher = {American Physical Society},
  doi = {10.1103/PhysRevLett.113.100502},
  url = {https://link.aps.org/doi/10.1103/PhysRevLett.113.100502}
}

@article{vanner2011pulsed,
  title={Pulsed quantum optomechanics},
  author={Vanner, Michael R and Pikovski, Igor and Cole, Garrett D and Kim, Myung Shik and Brukner, {\v{C}} and Hammerer, Klemens and Milburn, Gerard J and Aspelmeyer, Markus},
  journal={Proceedings of the National Academy of Sciences},
  volume={108},
  number={39},
  pages={16182--16187},
  year={2011},
  publisher={National Academy of Sciences},
  doi = {https://doi.org/10.1073/pnas.1105098108}
}

@article{szorkovszky2011mechanical,
  title={Mechanical squeezing via parametric amplification and weak measurement},
  author={Szorkovszky, Alex and Doherty, Andrew C and Harris, Glen I and Bowen, Warwick P},
  journal={Physical review letters},
  volume={107},
  number={21},
  pages={213603},
  year={2011},
  publisher={APS},
  doi = {https://doi.org/10.1103/PhysRevLett.107.213603}
}

@article{wollman2015quantum,
  title={Quantum squeezing of motion in a mechanical resonator},
  author={Wollman, Emma Edwina and Lei, CU and Weinstein, AJ and Suh, J and Kronwald, A and Marquardt, F and Clerk, Aashish A and Schwab, KC},
  journal={Science},
  volume={349},
  number={6251},
  pages={952--955},
  year={2015},
  publisher={American Association for the Advancement of Science},
  doi = {DOI: 10.1126/science.aac5138}
}

@article{pan2023protecting,
  title={Protecting the quantum interference of cat states by phase-space compression},
  author={Pan, Xiaozhou and Schwinger, Jonathan and Huang, Ni-Ni and Song, Pengtao and Chua, Weipin and Hanamura, Fumiya and Joshi, Atharv and Valadares, Fernando and Filip, Radim and Gao, Yvonne Y},
  journal={Physical Review X},
  volume={13},
  number={2},
  pages={021004},
  year={2023},
  publisher={APS},
  doi= {https://doi.org/10.1103/PhysRevX.13.021004}
}

@article{krisnanda2026direct,
  title={Direct estimation of arbitrary observables of an oscillator},
  author={Krisnanda, Tanjung and Valadares, Fernando and Chu, Kyle Timothy Ng and Song, Pengtao and Copetudo, Adrian and Fontaine, Clara Yun and Lachman, Luka{\'a}{\v{s}} and Filip, Radim and Gao, Yvonne Y},
  journal={Physical Review Research},
  volume={8},
  number={2},
  pages={023026},
  year={2026},
  publisher={APS},
  doi= {DOI: https://doi.org/10.1103/72f2-tgwp}
}

@article{heeres2015snap,
  title={Cavity State Manipulation Using Photon-Number Selective Phase Gates},
  author={Heeres, Reinier W. and Vlastakis, Brian and Holland, Eric and
          Krastanov, Stefan and Albert, Victor V. and Frunzio, Luigi and
          Jiang, Liang and Schoelkopf, Robert J.},
  journal={Physical Review Letters},
  volume={115},
  number={13},
  pages={137002},
  year={2015},
  publisher={APS}
}

@article{khaneja2005optimal,
  title={Optimal control of coupled spin dynamics: design of NMR pulse sequences by gradient ascent algorithms},
  author={Khaneja, Navin and Reiss, Timo and Kehlet, Cindie and Schulte-Herbr{\"u}ggen, Thomas and Glaser, Steffen J},
  journal={Journal of magnetic resonance},
  volume={172},
  number={2},
  pages={296--305},
  year={2005},
  publisher={Elsevier}
}

@article{deleglise2008reconstruction,
  title={Reconstruction of non-classical cavity field states with snapshots of their decoherence},
  author={Deleglise, Samuel and Dotsenko, Igor and Sayrin, Clement and Bernu, Julien and Brune, Michel and Raimond, Jean-Michel and Haroche, Serge},
  journal={Nature},
  volume={455},
  number={7212},
  pages={510--514},
  year={2008},
  publisher={Nature Publishing Group UK London}
}

@article{vlastakis2013deterministically,
  title={Deterministically encoding quantum information using 100-photon Schr{\"o}dinger cat states},
  author={Vlastakis, Brian and Kirchmair, Gerhard and Leghtas, Zaki and Nigg, Simon E and Frunzio, Luigi and Girvin, Steven M and Mirrahimi, Mazyar and Devoret, Michel H and Schoelkopf, Robert J},
  journal={Science},
  volume={342},
  number={6158},
  pages={607--610},
  year={2013},
  publisher={American Association for the Advancement of Science}
}

@article{wang2016schrodinger,
  title={A Schr{\"o}dinger cat living in two boxes},
  author={Wang, Chen and Gao, Yvonne Y and Reinhold, Philip and Heeres, Reinier W and Ofek, Nissim and Chou, Kevin and Axline, Christopher and Reagor, Matthew and Blumoff, Jacob and Sliwa, KM and others},
  journal={Science},
  volume={352},
  number={6289},
  pages={1087--1091},
  year={2016},
  publisher={American Association for the Advancement of Science}
}

@article{brune1992manipulation,
  title={Manipulation of photons in a cavity by dispersive atom-field coupling: Quantum-nondemolition measurements and generation of ‘‘Schr{\"o}dinger cat’’states},
  author={Brune, Michel and Haroche, Serge and Raimond, Jean-Michel and Davidovich, Luis and Zagury, Nicim},
  journal={Physical Review A},
  volume={45},
  number={7},
  pages={5193},
  year={1992},
  publisher={APS}
}

@article{lloyd1999quantum,
  title={Quantum computation over continuous variables},
  author={Lloyd, Seth and Braunstein, Samuel L},
  journal={Physical Review Letters},
  volume={82},
  number={8},
  pages={1784},
  year={1999},
  publisher={APS}
}
%\bibliography{C:/Users/Ebube/Documents/Projects/GeometricPhaseGate/v13Revised/paperrefs}
% 2) Copy the .bbl file to below and comment out the above two lines. 
%\bibliography{C:/Users/win/Documents/clusterQND/paper}

%apsrev4-2.bst 2019-01-14 (MD) hand-edited version of apsrev4-1.bst
%Control: key (0)
%Control: author (8) initials jnrlst
%Control: editor formatted (1) identically to author
%Control: production of article title (0) allowed
%Control: page (0) single
%Control: year (1) truncated
%Control: production of eprint (0) enabled

\end{document}